
\documentclass[sigconf]{acmart}

\copyrightyear{2022} 
\acmYear{2022} 
\setcopyright{acmcopyright}
\acmConference[SIGIR '22] {Proceedings of the 45th International ACM SIGIR Conference on Research and Development in Information Retrieval}{July 11--15, 2022}{Madrid, Spain.}
\acmBooktitle{Proceedings of the 45th International ACM SIGIR Conference on Research and Development in Information Retrieval (SIGIR '22), July 11--15, 2022, Madrid, Spain}

\acmPrice{15.00}
\acmISBN{978-1-4503-8732-3/22/07} 
\acmDOI{10.1145/3477495.3531985}

\usepackage{amssymb}
\usepackage{algorithm}
\usepackage{algorithmic}
\usepackage{bibentry}
\usepackage{amsfonts}
\usepackage{subfigure}
\usepackage{multirow}
\usepackage{amsmath}
\usepackage{graphicx}
\usepackage{epsfig}
\usepackage{color}
\usepackage{epstopdf}
\usepackage{rotating}
\usepackage{caption}
\usepackage{float}
\usepackage{multicol}
\usepackage{amssymb}
\setcounter{tocdepth}{3}
\usepackage{graphicx}
\usepackage{bbding}
\usepackage{algorithm}
\usepackage{algorithmic}
\usepackage{balance}
\usepackage{microtype}
\usepackage{graphicx}
\usepackage{subfigure}
\usepackage{booktabs} 
\usepackage{multirow}
\usepackage{amsmath}

\usepackage{mathtools}
\usepackage{etoolbox}
\usepackage{cases}
\usepackage{enumitem}
\usepackage{xcolor}
\usepackage{subfigure}
\usepackage{color}
\definecolor{brown}{RGB}{139,64,0}



\usepackage{amssymb}

\usepackage{cases}
 

\hyphenpenalty=5000 
\tolerance=1000 

\captionsetup{%
  figurename=Figure,
  tablename=Table
}


\usepackage{mathtools}

\usepackage{amssymb}
\usepackage{amsthm}
\usepackage{color}
\usepackage{multirow}

\newcommand{\ourname}{{GTN}}


\newcommand{\cut}[1]{{}}


\newcommand{\tA}{{\tilde{\vA}}}
\newcommand{\tD}{{\tilde{\vD}}}
\newcommand{\tL}{{\tilde{\vL}}}
\newcommand{\hA}{{\hat{\vA}}}
\newcommand{\hD}{{\hat{\vD}}}

\newcommand{\tDelta}{{\tilde{\Delta}}}


\newcommand{\ve}{{\mathbf{e}}}

\newcommand{\vA}{{\mathbf{A}}}

\newcommand{\vD}{{\mathbf{D}}}
\newcommand{\vE}{{\mathbf{E}}}
\newcommand{\vF}{{\mathbf{F}}}

\newcommand{\vI}{{\mathbf{I}}}

\newcommand{\vL}{{\mathbf{L}}}

\newcommand{\vW}{{\mathbf{W}}}
\newcommand{\vX}{{\mathbf{X}}}
\newcommand{\vY}{{\mathbf{Y}}}
\newcommand{\vZ}{{\mathbf{Z}}}

\newcommand{\cE}{{\mathcal{E}}}

\newcommand{\cG}{{\mathcal{G}}}

\newcommand{\cL}{{\mathcal{L}}}

\newcommand{\cN}{{\mathcal{N}}}

\newcommand{\cV}{{\mathcal{V}}}


\newcommand{\RR}{\mathbb{R}}

\newcommand{\sign}{\mathrm{sign}}

\newcommand{\st}{{\text{s.t.}}} 
\newcommand{\tr}{{\mathrm{tr}}} 

\newcommand{\prox}{\mathbf{prox}}

\makeatletter
\let\@@span\span
\def\sp@n{\@@span\omit\advance\@multicnt\m@ne}
\makeatother

\DeclareMathOperator*{\argmin}{arg\,min}


\newcommand{\bc}{\begin{center}}
\newcommand{\ec}{\end{center}}

\newcommand{\bdm}{\begin{displaymath}}
\newcommand{\edm}{\end{displaymath}}

\newcommand{\beq}{\begin{equation}}
\newcommand{\eeq}{\end{equation}}

\newcommand{\bfl}{\begin{flushleft}}
\newcommand{\efl}{\end{flushleft}}

\newcommand{\bt}{\begin{tabbing}}
\newcommand{\et}{\end{tabbing}}

\newcommand{\beqn}{\begin{align}}
\newcommand{\eeqn}{\end{align}}

\newcommand{\beqs}{\begin{align*}} 
\newcommand{\eeqs}{\end{align*}}  


\newtheorem{theorem}{Theorem}

\begin{document}

\fancyhead{}

\title{Graph Trend Filtering Networks for Recommendation}

\author{Wenqi Fan$^{1\dag}$, Xiaorui Liu$^{2\dag}$, Wei Jin$^{2}$, Xiangyu Zhao$^{3*}$, Jiliang Tang$^2$,  and Qing Li$^1$}

\affiliation{\institution{$^1$The Hong Kong Polytechnic University,$^2$Michigan State University,$^3$City University of Hong Kong\country{}}}

\email{wenqifan03@gmail.com,{xiaorui,jinwei2,tangjili}@msu.edu,xianzhao@cityu.edu.hk, csqli@comp.polyu.edu.hk}

\thanks{* Corresponding author. }
\thanks{$\dag$ Equal contributions..}








\begin{abstract}

Recommender systems aim to provide personalized services to users and are playing an increasingly important role in our daily lives. The key of recommender systems is to predict  how likely users will interact with items based on their historical online behaviors, e.g., clicks, add-to-cart, purchases, etc. To exploit these user-item interactions, there are increasing efforts on considering the user-item interactions as a user-item bipartite graph and then performing information propagation in the graph via  Graph Neural Networks (GNNs). Given the power of GNNs in graph representation learning, these GNNs-based recommendation methods have remarkably boosted the recommendation performance.  Despite their success, most  existing GNNs-based recommender systems overlook the existence of interactions caused by unreliable behaviors (e.g., random/bait clicks) and uniformly treat all the interactions, which can lead to sub-optimal and unstable performance. In this paper, we investigate the drawbacks (e.g., non-adaptive propagation and non-robustness) of existing GNN-based recommendation methods.  To address these drawbacks,  we introduce a principled graph trend collaborative filtering method and propose the Graph Trend Filtering Networks for recommendations (GTN) that can capture the adaptive reliability of the interactions. Comprehensive experiments and ablation studies are presented to verify and understand the effectiveness of the proposed framework.  Our implementation based on PyTorch is available\footnote{\url{https://github.com/wenqifan03/GTN-SIGIR2022}}.

\end{abstract}

 \settopmatter{printfolios=true}

%
%
\begin{CCSXML}
	<ccs2012>
	<concept>
	<concept_id>10002951.10003317.10003347.10003350</concept_id>
	<concept_desc>Information systems~Recommender systems</concept_desc> <concept_significance>500</concept_significance>
	</concept>
	</ccs2012>
\end{CCSXML}

\ccsdesc[500]{Information systems~Recommender systems}

\keywords{Collaborative Filtering, Recommendation, Embedding Propagation, Graph Neural Networks, Trend Filtering, Graph Trend Filtering}

\maketitle
\thispagestyle{empty}


 \section{Introduction}

Personalized recommendations aim to alleviate information overload problem through assisting users in discovering items of interest, and they have been deployed to many user-oriented online services such as E-commerce (e.g., Amazon, Taobao), and Social Media sites (e.g., Facebook, Twitter, Weibo)~\cite{chen2017attentive,fan2020graph, He2017NeuralCF,zhao2021autoloss,chen2022automated}. The key of recommender systems is to predict whether users  are likely to interact with items based on the historical interactions~\cite{He2020LightGCNSA,fan2021attacking,chen2022autogsr,zhao2021autoemb}, including clicks, add-to-cart, purchases, etc. As a widely used solution, collaborative filtering (CF) techniques are developed to model historical user-item interactions, assuming that users who behave similarly are likely to have similar preferences towards items~\cite{chen2017attentive,fan2019graph}.  

Among various collaborative filtering techniques, matrix factorization (MF) is the most popular one~\cite{He2017NeuralCF}. It embeds users and items into the vectorized representations (i.e., embeddings) and models the user-item interaction via the inner products between the user and item representations~\cite{Rendle2009BPRBP,fan2019deep_daso}. Later on, neural collaborative filtering (NeuCF) model is proposed to replace the inner product in MF model with neural networks in order to model the non-linear interactions between users and items~\cite{He2017NeuralCF}. However, in these models, the user-item interactions are only considered in the objective function for model training, which leads to unsatisfactory exploitation of the interaction data. To better exploit the user-item interactions as well as the high-order connectivity therein, graph collaborative filtering models such as NGCF~\cite{Wang2019NeuralGC} and LightGCN~\cite{He2020LightGCNSA} propose to explicitly propagate the user embedding $\ve_u^k$ and item embedding $\ve_i^k$ according to the user-item interaction through the propagation:  
\begin{align}
\label{eq:gcn_prop1}
    \ve_u^{k+1} &= \frac{1}{\sqrt{|\cN(u)|}} \sum_{i\in\cN(u)} \frac{1}{ \sqrt{|\cN(i)|} } \ve_i^k, \\
\label{eq:gcn_prop2}
    \ve_i^{k+1} &= \frac{1}{\sqrt{|\cN(i)|}} \sum_{u\in\cN(i)} \frac{1}{ \sqrt{|\cN(u)|} } \ve_u^k
\end{align}
where $\cN(u)$ denotes the set of items that user $u$ interacts with and $\cN(i)$ denotes the set of users who interact with item $i$. Such embedding propagation motivated by the feature aggregation in graph neural networks (GNNs)~\cite{ma2020unified,jin2021node,liu2021graph,wu2020comprehensive,derr2020epidemic,liu2021elastic,Fan2021JointlyAG} is the key for graph collaborative filtering models to significantly advance the state-of-art of recommender systems.

Despite their success, prior manner of modeling user-item relationships is insufficient to discover the heterogeneous reliability of interactions among instances in recommender systems. The key reason is that most existing deep recommender systems uniformly treat all the interactions. For instance, as shown in the propagation rule~\eqref{eq:gcn_prop1}, the item embedding $\ve_i^k$ is propagated to all relevant users equally.
This leads to sub-optimal representation learning for recommendations based on the historical behavior data since the reliability of the interactions is often heterogeneous.

In fact, in most e-commerce platforms, a large portion of clicks  (e.g., \emph{random/bait clicks}) are not directly related to purchases and many purchases end up with negative reviews, where users' implicit feedback (e.g., click and purchase) is unreliable and cannot reflect the actual satisfaction and preferences of users~\cite{wang2019sequential,wu2016collaborative,wang2021clicks,liu2021trustworthy}. 
As  shown in Figure~\ref{fig:intro},  with financial incentive, some sellers can sell their items with attractive exposure features (e.g., headline or cover of items), so as to deliberately mislead buyers (e.g., user 3) to interact (e.g., clicks, add-to-cart, purchases, etc.), which is known as click-bait issue; a user (i.e., user 2), who is interested in electronic products,  purchased a one-time item (i.e., handbag) for his mother's birthday present.
Worse still, such unreliable feedback may dominate the entire training dataset, and is hard to identify or prune, which may hinder the recommender systems from learning the actual user/item representations based on the user-item graph and make the recommender systems vulnerable to unreliable interactions.

\begin{figure}[tbp]
\centering
{\includegraphics[width=0.700\linewidth]{{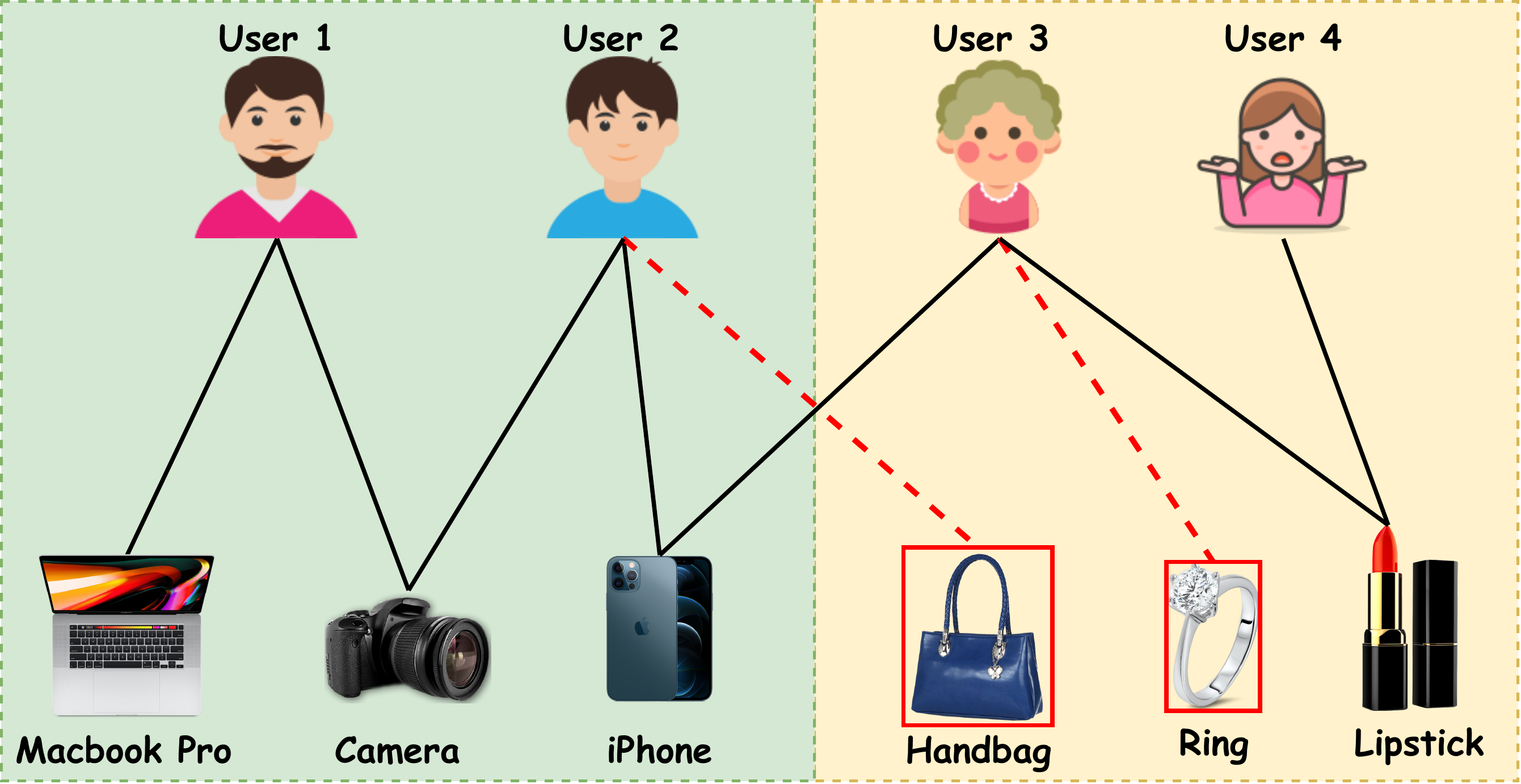}}}
\caption{An illustration on unreliable user-item interactions. User 2 bought a one-time item (i.e., Handbag) for his mother's birthday present;  User 3 was affected by the  click-bait issue and was `cheated' to interact with an item (i.e., Ring)  by the attractive  exposure features (e.g.,  headline or cover of the item).}\label{fig:intro}
\end{figure}

The aforementioned deficiency significantly limits the applications of existing collaborative filtering methods in real-world recommender systems. Therefore, it is desired to design a new collaborative filtering method that adaptively propagates the embedding
in recommender systems, which can lead to more accurate and robust recommendations.
In this paper, inspired by the concepts of trend filtering~\cite{kim2009ell_1, tibshirani2014adaptive} and graph trend filtering~\cite{wang2016trend,liu2021elastic}, 
we propose \textbf{G}raph \textbf{T}rend  filtering \textbf{N}etworks (\textbf{GTN}) to capture and learn the adaptive importance of the interactions in recommender systems. 
Our major contributions can be summarized as follows:
\begin{itemize}[leftmargin=*] 
\item 
We investigate the drawbacks of existing GNNs-based recommendation methods such as non-adaptive propagation and non-robustness from the perspective of Laplacian smoothness; 

\item 

We introduce a principled graph trend collaborative filtering technique and propose a novel graph trend filtering networks framework (GTN) to capture the adaptive reliability of the interactions between users and items in recommender systems.

\item 

Comprehensive experiments and ablation study on various real-world datasets are conducted to demonstrate the superiority of the proposed framework.
\end{itemize}


\section{Preliminary}

In this section, we introduce the notations used in this paper and briefly review some preliminaries about embedding propagation in GNNs-based collaborative filtering.

\subsection{Notations and Definitions}

We use bold upper-case letters such as $\vE$ to denote matrices and bold lower-case letters such as $\ve$ to define vectors. 
Given a matrix $\vE\in \mathbb{R}^{(n+m)\times d}$, we use $\vE_i$ to denote its $i$-th row and $\vE_{ij}$ to denote its element in $i$-th row and $j$-th column. We define the Frobenius norm and $\ell_1$ norm of matrix $\vE$ as $\|\vE\|_F=\sqrt{\sum_{ij} \vE_{ij}^2}$ and $\|\vE\|_1=\sum_{ij} |\vE_{ij}|$, respectively. We define $\|\vE\|_2=\sigma_{\max}(\vE)$ where $\sigma_{\max}(\vE)$ is the largest singular value of $\vE$.
Given two matrices $\vE, \vF \in \mathbb{R}^{n\times d}$, we define their inner product as $\langle \vE, \vF \rangle = \tr(\vE^{\top}\vF)$.

The historical interactions between users and items in recommender systems can be naturally represented as a bipartite graph $\mathcal{G}=\{\cV, \cE\}$, where the node set $\cV$ includes user nodes $\{v_1, \dots, v_n\}$ and item nodes $\{v_{n+1}, \dots, v_{n+m} \}$, and the implicit interactions can be denoted as the edges between user nodes and item nodes in the graph.
We denote the undirected edge set as $\mathcal{E}=\{e_1, \dots, e_{|\cE|} \}$. 

The graph structure of $\mathcal{G}$ can be represented as an adjacent matrix $\vA\in \mathbb{R}^{(n+m)\times (n+m)}$, where $\vA_{ij}=1$ when there exists an edge or interaction between nodes $v_i$ and $v_j$.
Note that there is no direct interaction within users or items so that the graph is a bipartite graph.  
We use $\cN(v_i)$ to denote the neighboring nodes of node $v_i$, including $v_i$ itself. 
The graph Laplacian matrix is defined as $\vL=\vD-\vA$, where $\vD$ is the diagonal degree matrix. 
Let $\Delta \in \{-1, 0, 1\}^{|\cE|\times |\cV|}$ be the oriented incident matrix, which contains one row for each edge. 
If $e_{\ell}=(i,j)$, 
then $\Delta$ has the $\ell$-th row as:
$$\Delta_{\ell} = (0, \dots, \underbrace{-1}_{i}, \dots, \underbrace{1}_{j}, \dots, 0),
$$
where the edge orientation can be arbitrary. 
In addition, we denote the embeddings of nodes in graph $\cG$ as:
\begin{gather}\label{equ:e-0}
	\vE = [\underbrace{\ve_1,\cdots,\ve_n}_{\text{users embeddings}}, \underbrace{\mathbf{e}_{n+1},\cdots,\ve_{n+m}}_{\text{item embeddings}}]^\top \in \RR^{(n+m)\times d}
\end{gather}
where $d$ is the dimension of embeddings.

\begin{figure*}[htbp]
\centering
{\subfigure[Gowalla-Recall@20]
{\includegraphics[width=0.21485\linewidth]{{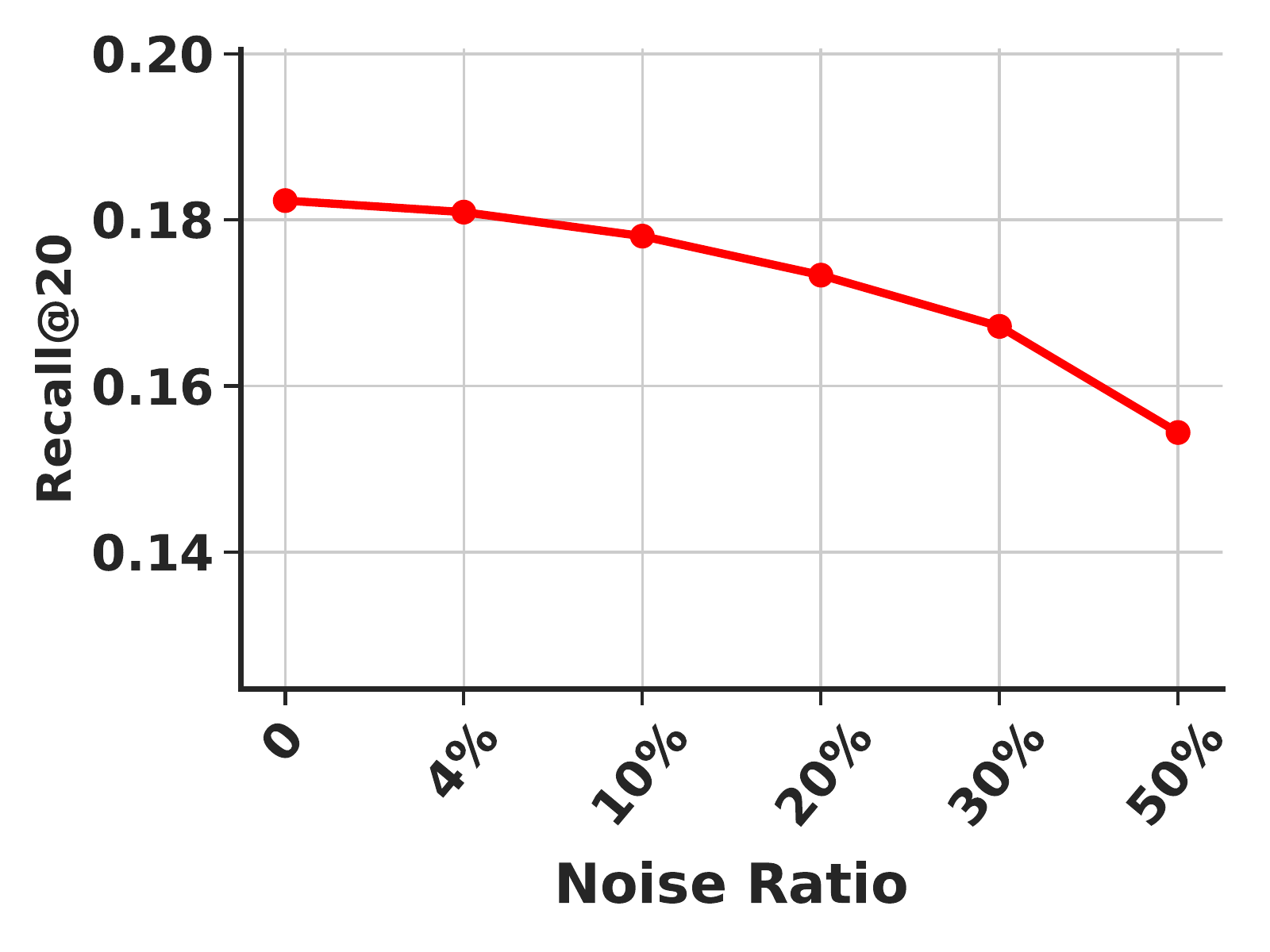}}}}
{\subfigure[Yelp2018-Recall@20]
{\includegraphics[width=0.21485\linewidth]{{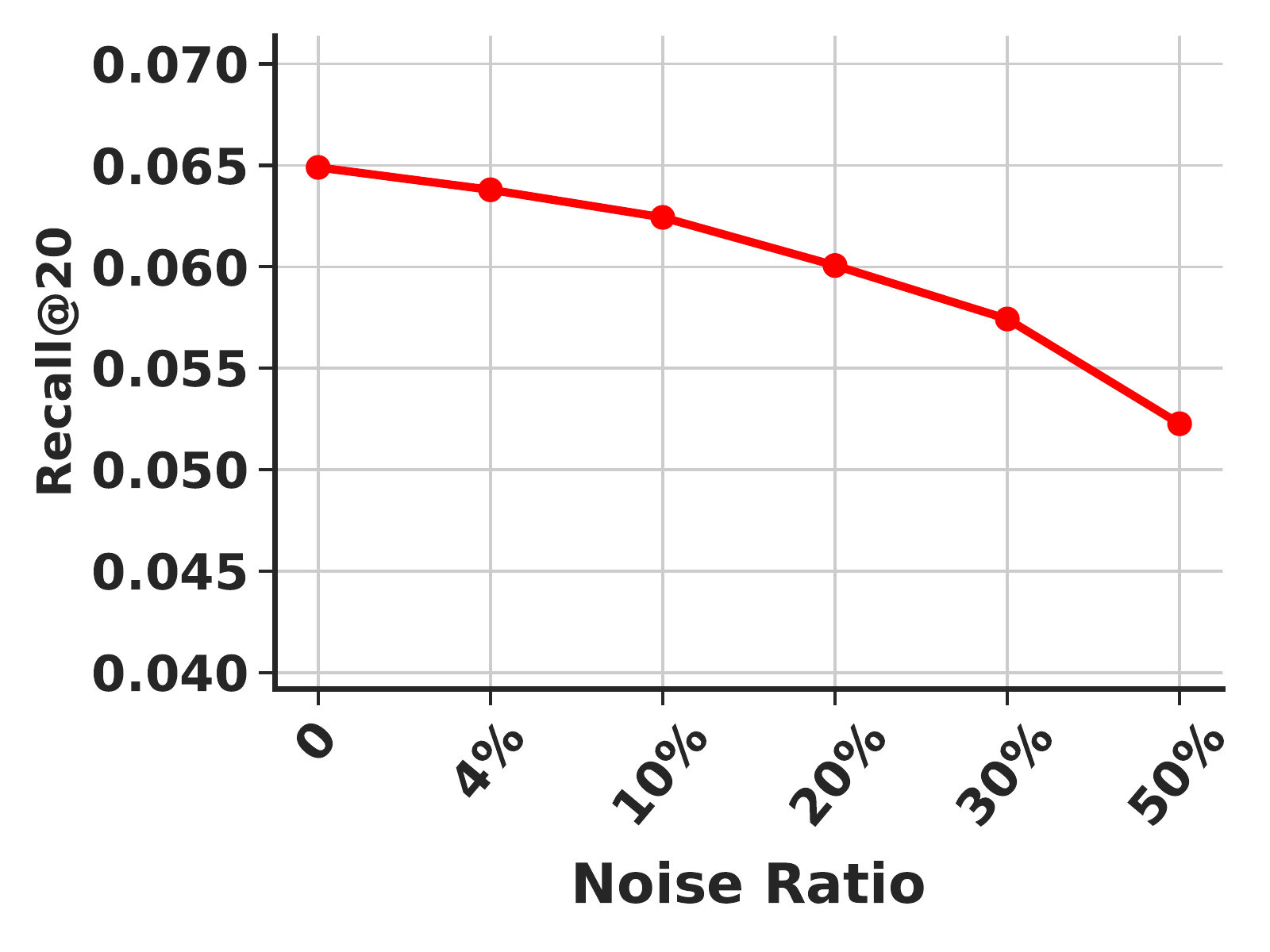}}}}
{\subfigure[Gowalla-NDCG@20]
{\includegraphics[width=0.21485\linewidth]{{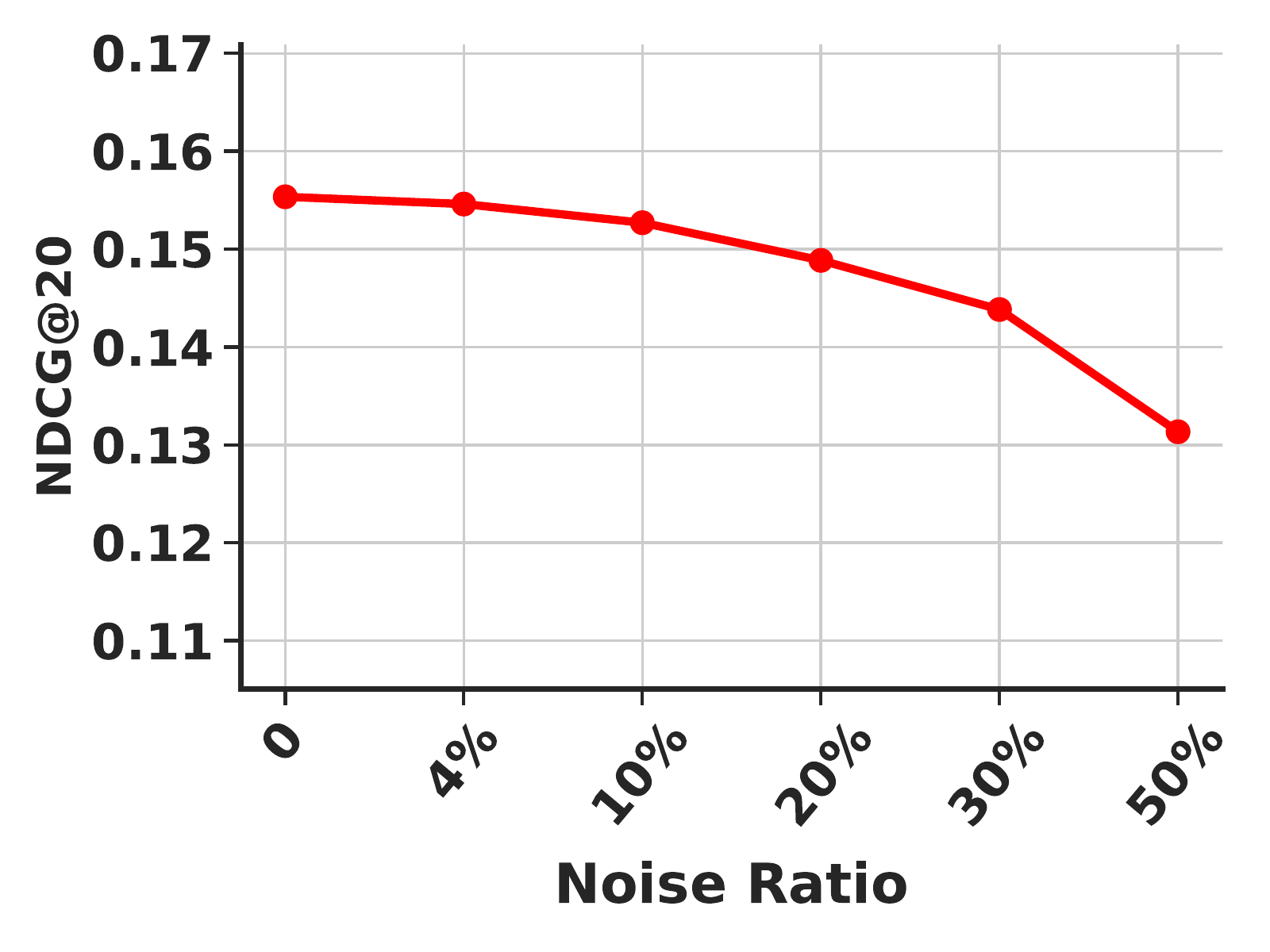}}}}
{\subfigure[Yelp2018-NDCG@20]
{\includegraphics[width=0.21485\linewidth]{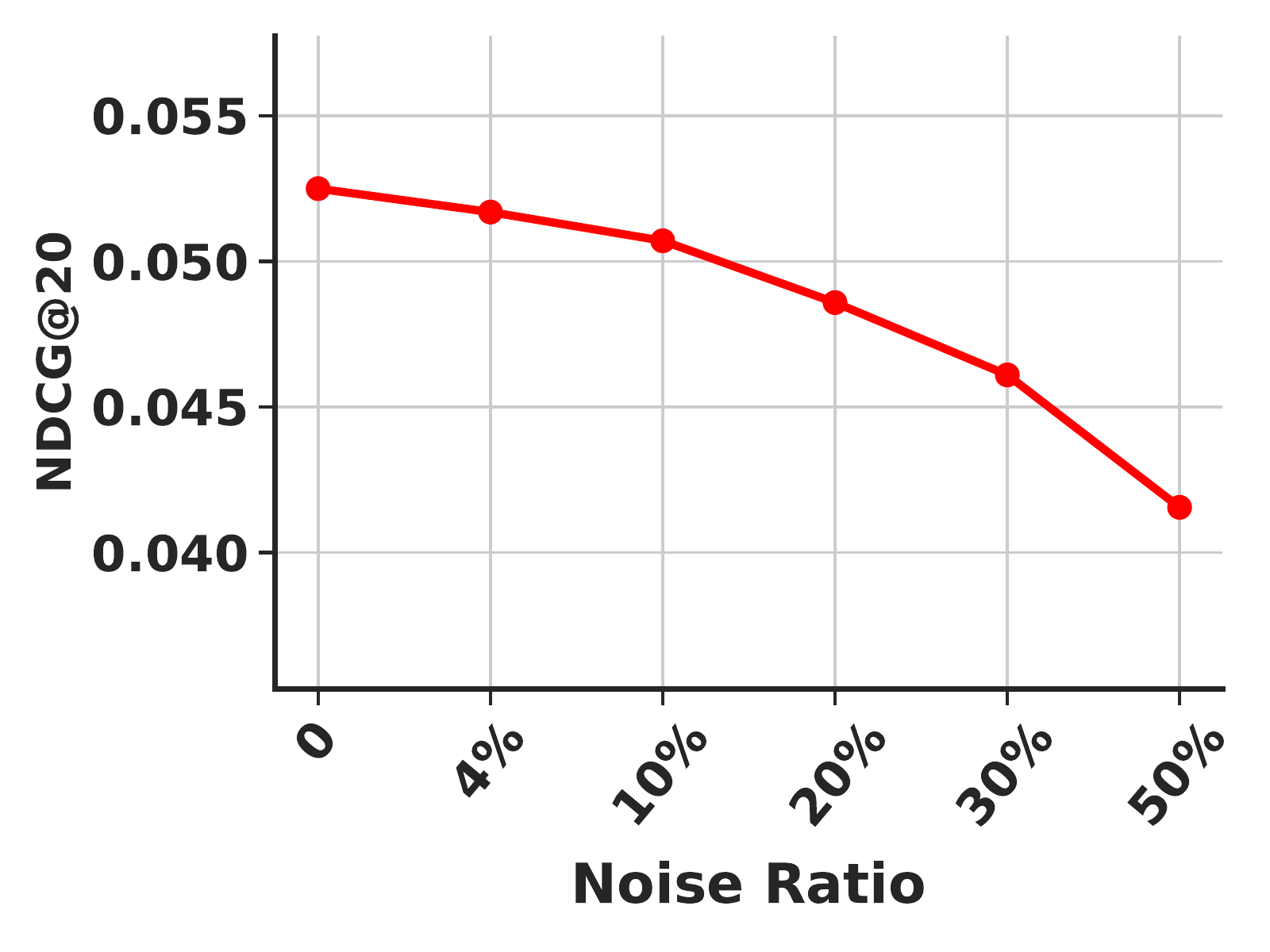}}}
\caption{Performance of LightGCN under different interaction perturbation rates.}\label{fig:perturbation_preliminary}

\end{figure*}

\subsection{Embedding Propagation and Smoothness}
\label{sec:emb_prop}
 
Given the embeddings $\vE^k$, the embedding propagation (i.e, Eq.~\eqref{eq:gcn_prop1} and Eq.~\eqref{eq:gcn_prop2}) in GNNs-based collaborative filtering can be written in a matrix form:
\begin{align}
    \vE^{k+1} = \tA \vE^k
\label{eq:prop_matrix}
\end{align}
where $\tA = \hD^{-\frac{1}{2}} \hA \hD^{-\frac{1}{2}}$ and $\hA=\vA+\vI$ whose degree matrix is $\hD$. 
 
The propagation is motivated by the graph convolutional operator in graph neural networks~\cite{kipf2016semi}, and it is shown as solving the Laplacian smoothing problem~\cite{ma2020unified} 
to enforce embedding smoothness over the user-item interactions
\begin{align}
     \argmin_{\vE\in\RR^{(n+m)\times d}} \tr(\vE^\top (\vI-\tA) \vE).
    \label{eq:laplacian_smooth1}
\end{align}

In other words, the embedding propagation in Eq~\eqref{eq:prop_matrix} is equivalent to the gradient descent on problem~\eqref{eq:laplacian_smooth1} with the initialization $\vE^k$ and stepsize $1/2$. 
It can be rewritten in an edge-wise form as: 
\begin{align}
    \tr (\vE^\top (\vI-\tA) \vE) = \sum_{(i,j)\in\cE} \| \frac{\ve_i}{\sqrt{d_i+1}} - \frac{\ve_j}{\sqrt{d_j+1}} \|_2^2
\label{eq:laplacian_smooth2}
\end{align}
with $d_i$ being the node degree of node $v_i$. It is clear that the embedding propagation essentially enforces the user embedding $\ve_i$ and item embedding $\ve_j$ to be close (i.e., embedding smoothness), if there exist interactions between them. 
As a result, their inner products become large, which  can provide positive predictions on the recommendation. However, the embedding smoothness is equally imposed over all the interactions as showed in Eq.~\eqref{eq:laplacian_smooth2}. Consequently, the embedding propagation might cause inappropriate smoothness caused by unreliable interactions such as unintended clicks. For instance, if a user randomly clicks some items, it doesn't necessarily mean the user actually likes the items. Unfortunately, the embedding propagation due to these interactions will enforce the user embedding to be similar to those items. As a result, the recommender system will rank these items at the top for the user's recommendation.
Such non-adaptive propagation results in the non-robustness of graph collaborative filtering, which limits the application in real-world recommender systems where unreliable interactions widely exist. 

To verify the non-robustness of such embedding propagation, we conduct a preliminary study to assess the performance of a representative graph collaborative filtering method when the interactions between users and items contain different ratios of random noise. This is evaluated by injecting random user-item interactions into the datasets.
The experimental results are shown in Figure~\ref{fig:perturbation_preliminary}. It can be observed that the performance of  LightGCN~\cite{He2020LightGCNSA} drops significantly when the interactions between users and items become more noisy, which confirms our discussion.

\section{The Proposed Method}
\label{sec:methodlogy}

In this section, we motivate the idea of graph trend collaborative filtering  which models adaptive smoothness over the interactions, and introduce the corresponding algorithm.
Finally, we present the overall recommendation framework as well as the training process.

\newcommand{\Ein}{{\vE_{\text{in}}}}

\subsection{Design Motivation}
The non-adaptive and non-robust characteristics of embedding propagation in existing graph collaborative filtering methods have been discussed and verified in Section~\ref{sec:emb_prop}. To build a more reliable and robust recommender system, we aim to design a new collaborative filtering method that models the adaptive smoothness over the user-item interactions. 
Motivated by the concepts of trend filtering~\cite{kim2009ell_1, tibshirani2014adaptive} and graph trend filtering~\cite{wang2016trend}, we propose the following embedding smoothness objective for user-item graph in recommendations:
\begin{align}
\label{eq:gtf_obj}
    &\argmin_{\vE \in \RR^{(n+m)\times d}} \frac{1}{2} \|\vE-\Ein \|_F^2 + \lambda \| \tDelta\vE \|_1,
\end{align}
where $\Ein$ contains the initial embeddings of items and users. $\tDelta=\Delta \tD^{-\frac{1}{2}}$ is the incident matrix normalized by the square root of node degrees\footnote{The purpose of normalization is to ensure numerical stability and to reduce the dominating impacts of nodes with high degrees.}.
$\lambda$ is a hyper-parameter to control the strength of the embedding smoothness. The first term preserves the proximity with the initial embedding of users and items,
and the second term imposes embedding smoothness over the interactions $\cE$ since
\begin{align}
\| \tDelta\vE \|_1 = \sum_{(i,j) \in \cE} \|\frac{\ve_i}{\sqrt{d_i+1}}-\frac{\ve_j}{\sqrt{d_j+1}}\|_1.
\end{align}
It can be also rewritten as 
\begin{align}
\| \tDelta\vE \|_1 = \sum_{(i,j) \in \cE} \vW_{ij} \cdot \|\frac{\ve_i}{\sqrt{d_i+1}}-\frac{\ve_j}{\sqrt{d_j+1}}\|_2^2
\end{align}
where 
$$\vW_{ij} = \frac{\|\frac{\ve_i}{\sqrt{d_i+1}}-\frac{\ve_j}{\sqrt{d_j+1}}\|_1}{\|\frac{\ve_i}{\sqrt{d_i+1}}-\frac{\ve_j}{\sqrt{d_j+1}}\|_2^2}.$$
Comparing with the objective in Eq.~\eqref{eq:laplacian_smooth2}, $\| \tDelta \vE \|_1$ can be considered as a re-weighted version of $\tr(\vE^\top (\vI-\tA) \vE)$ where the smoothness over the interaction $(i, j)$ in the user-item graph is weighted by $\vW_{ij}$. Note that $\vW_{ij}$ is small if $\frac{\ve_i}{\sqrt{d_i+1}}$ is far away from $\frac{\ve_j}{\sqrt{d_j+1}}$, while $\vW_{ij}$ is large if
$\frac{\ve_i}{\sqrt{d_i+1}}$ is close to $\frac{\ve_j}{\sqrt{d_j+1}}$. Therefore, if an interaction happens between a user and an item that has significantly different embeddings, such interaction might be unreliable or might not reflect the actual preference of the user, so that it should be down weighted. 

The embedding smoothness prior defined in Eq.~\eqref{eq:gtf_obj} inherits the adaptive smoothness property from graph trend filtering that has been shown to adapt to inhomogeneity in the level of smoothness of signal and tend to provide piecewise polynomial (e.g., piecewise constant) signal over the graph~\cite{wang2016trend}. Specifically, the sparsity induced by the penalty $\|\tDelta \vE\|_1$ implies that many of the embedding differences $\frac{\ve_i}{\sqrt{d_i+1}}-\frac{\ve_j}{\sqrt{d_j+1}}$ are zeros across the edges $(i, j)$ if the embedding difference is sufficiently small. In other words, if the user and item embeddings are very similar, the filtering will make them even more closer. 
Moreover, the penalty is able to tolerate large embedding differences. 
That is to say, if the user and item embeddings are significantly different, the filtering will maintain the difference, which avoids improper smoothing. Therefore, it leads to more robust and adaptive propagation.

\subsection{Graph Trend Collaborative Filtering}

With the defined smoothness assumption in Eq.~\eqref{eq:gtf_obj}, we here derive an efficient algorithm to achieve the desired adaptive smoothness property for user-item graph in recommendations. The objective in Eq.~\eqref{eq:gtf_obj} is convex so that any local minimum is equivalent to the global minimum. Moreover, it is composed by a smooth function $f(\vE)=\frac{1}{2}\|\vE-\Ein\|_F^2$ and a nonsmooth function $g(\vE)=\lambda \|\tDelta \vE\|_1$, which are further coupled by the normalized incident matrix $\tDelta$ of the user-item graph. To decouple these two components, we first reformulate the problem as an equivalent constrained problem:
\begin{align}
    \min_{\vE, \vF} \frac{1}{2} \|\vE-\Ein\|_F^2 + \lambda \|\vF\|_1 ~~~
    \st ~~\vF = \tDelta \vE
    \nonumber
\end{align}
This is equivalent to the following saddle point problem by introducing the Lagrangian multipliers $\vY\in\RR^{|\cE|\times d}$:
\begin{align}
     \min_{\vE, \vF} \max_{\vY} \frac{1}{2} \|\vE-\Ein\|_F^2 + \langle \tDelta \vE, \vY \rangle + \|\vF\|_1 + \langle -\vF, \vY \rangle.
    \nonumber
 \end{align}
Using the conjugate function for $g(\vF) = \lambda \|\vF\|_1$:
\begin{align}
    g^*(\vY) \coloneqq \sup_{\vX} \langle \vY, \vX\rangle - g(\vX),
    \nonumber
\end{align}
 the saddle point problem can we rewritten as:
\begin{align}
\min_{\vE} \max_{\vY} \frac{1}{2} \|\vE-\Ein\|_F^2 + \langle \tDelta \vE, \vY \rangle - g^*(\vY).  
\nonumber
\end{align}

Then we can apply the primal-dual algorithm, Proximal Alternating Predictor-Corrector (PAPC)~\cite{loris2011generalization},
to obtain the following iterative solver:
\begin{align}
\bar \vE^{k+1} &= \vE^k - \gamma (\vE^k -\Ein) - \gamma \tDelta^\top \vY^k \label{eq:pd_step1} \\
\vY^{k+1} &= \prox_{\beta g^*} (\vY^k + \beta \tDelta \bar \vE^{k+1})  \label{eq:pd_step2} \\
\vE^{k+1} &= \vE^k - \gamma (\vE^k-\Ein) - \gamma \tDelta^\top \vY^{k+1} \label{eq:pd_step3} 
\end{align}
where $\prox_{\beta g^*}(\vX) = \argmin_{\vZ} \frac{1}{2} \|\vZ-\vX\|_F^2 + \beta g^*(\vZ)$, and $\gamma$ and $\beta$ are the primal and dual stepsizes to be specified later. The first step in Eq.~\eqref{eq:pd_step1} is a gradient descent step on the primal variable $\vE^k$, and the second step in Eq.~\eqref{eq:pd_step2} is a proximal dual ascent step on the dual variable $\vY^k$ based on the primal prediction $\bar \vE^{k+1}$. The third step in Eq.~\eqref{eq:pd_step3} is a gradient descent step on the primal variable based on the dual variable $\vY^{k+1}$. 

Next we describe the computation for the proximal operator $\prox_{\beta g^*}(\vX)$. By definition, the proximal operator of $\frac{1}{\beta} g(\vX)$ is defined as:
\begin{align}
\prox_{\frac{1}{\beta} g} (\vX) = \argmin_\vZ \frac{1}{2} \|\vZ-\vX\|_F^2 + \frac{\lambda}{\beta} \|\vZ\|_1,
\nonumber
\end{align}
which is equivalent to the soft-thresholding operator:
\begin{align}
    \big( S_{\frac{\lambda}{\beta}} (\vX) \big)_{ij} &= \sign(\vX_{ij}) \max(|\vX_{ij}| - \frac{\lambda}{\beta}, 0) \nonumber \\
    &= \vX_{ij} - \sign(\vX_{ij}) \min (|\vX_{ij}|, \frac{\lambda}{\beta}). \nonumber
\end{align}

Using the Moreau's decomposition principle~\cite{10.5555/2028633}
\begin{align}
    \vX = \prox_{\beta g^*} (\vX) + \beta \prox_{\frac{1}{\beta} g} (\frac{\vX}{\beta}),
    \nonumber
\end{align}
we can derive:
\begin{align}
    \prox_{\beta g^*} (\vX) &= \vX - \beta \prox_{\frac{1}{\beta} g} (\frac{\vX}{\beta}) \nonumber \\
    &= \vX - \beta \Big ( \frac{\vX}{\beta} - \sign(\frac{\vX}{\beta}) \min (|\frac{\vX}{\beta}|, \frac{\lambda}{\beta}) \Big ) \nonumber \\
    &= \sign(\vX) \min (|\vX|, \lambda). \nonumber
\end{align}

\noindent {\textbf{Parameter settings}.} We provide the following theorem to ensure the convergence of the derived iteration solver and provide guidance on the parameter settings.
\begin{theorem}
\label{thm:convergence}
Under the stepsize setting $\gamma = 1$ and $\beta =\frac{1}{2}$,
the iterations in Eq.~\eqref{eq:pd_step1}, ~\eqref{eq:pd_step2} and~\eqref{eq:pd_step3} are guaranteed to converge to the optimal solution of problem~\eqref{eq:gtf_obj}. 
\end{theorem}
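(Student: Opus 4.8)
The plan is to recognize the iterations~\eqref{eq:pd_step1}--\eqref{eq:pd_step3} as an instance of the Proximal Alternating Predictor--Corrector (PAPC) scheme of~\cite{loris2011generalization} applied to the composite problem $\min_{\vE} f(\vE)+g(\tDelta\vE)$ with $f(\vE)=\tfrac12\|\vE-\Ein\|_F^2$ and $g(\vF)=\lambda\|\vF\|_1$, and then to check that the specific choice $\gamma=1$, $\beta=\tfrac12$ satisfies PAPC's convergence hypotheses \emph{for this $f$ and this $\tDelta$}. PAPC is known to converge to a primal--dual optimal pair whenever $f$ is convex with $L_f$-Lipschitz gradient, $g$ is proper, closed and convex, $0<\gamma<2/L_f$, and $\gamma\beta\,\|\tDelta\|_2^2\le 1$. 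So the argument splits into: (i) identifying $L_f$; (ii) bounding $\|\tDelta\|_2$; (iii) substituting the stated stepsizes; and (iv) translating the resulting saddle point back into a minimizer of~\eqref{eq:gtf_obj}.

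For step (i), $\nabla f(\vE)=\vE-\Ein$ is $1$-Lipschitz, so $L_f=1$ and $\gamma=1<2/L_f=2$; moreover $f$ is $1$-strongly convex, so~\eqref{eq:gtf_obj} has a unique minimizer, and $g$ is the (convex, closed, proper) $\ell_1$ norm, whose relevant proximal operators were already computed above.

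Step (ii) is the only real computation, and I expect it to be the main obstacle: I would establish the identity $\tDelta^\top\tDelta=\vI-\tA$, from which $\|\tDelta\|_2^2\le 2$ follows. Since $\Delta$ is the oriented incidence matrix of $\cG$, a row-by-row count gives $\Delta^\top\Delta=\vD-\vA$. Because $\tDelta=\Delta\tD^{-1/2}$ with $\tD=\vD+\vI=\hD$ and $\hA=\vA+\vI$, this yields
\begin{align*}
\tDelta^\top\tDelta
&=\tD^{-1/2}(\vD-\vA)\tD^{-1/2}=\hD^{-1/2}(\hD-\hA)\hD^{-1/2}\\
&=\vI-\hD^{-1/2}\hA\hD^{-1/2}=\vI-\tA .
\end{align*}
Hence $\|\tDelta\|_2^2=\lambda_{\max}(\vI-\tA)=1-\lambda_{\min}(\tA)$, and since $\tA$ is the symmetrically normalized adjacency matrix of the self-loop--augmented bipartite graph, its spectrum lies in $(-1,1]$, so $\lambda_{\min}(\tA)\ge -1$ and $\|\tDelta\|_2^2\le 2$. (If a strict inequality is needed, note $\vI+\tA=\hD^{-1/2}(\vD+\vA+2\vI)\hD^{-1/2}\succ 0$ because the signless Laplacian satisfies $\vD+\vA\succeq 0$, which gives $\lambda_{\min}(\tA)>-1$.)

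Finally, for (iii)--(iv): with $\gamma=1$ and $\beta=\tfrac12$ we obtain $\gamma\beta\|\tDelta\|_2^2=\tfrac12\|\tDelta\|_2^2\le 1$, so both PAPC stepsize conditions hold; the scheme therefore produces a sequence converging to a saddle point $(\vE^\star,\vY^\star)$ of the min--max reformulation, and by its optimality (KKT) conditions the limit $\vE^\star$ is exactly the (unique) minimizer of~\eqref{eq:gtf_obj}. Everything outside the spectral identity in step (ii) is routine bookkeeping around the cited theorem.
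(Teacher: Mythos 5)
Your proposal is correct and follows essentially the same route as the paper: invoke the PAPC convergence conditions $\gamma<2/L$ and $\gamma\beta\|\tDelta\tDelta^\top\|_2\le 1$ from~\cite{loris2011generalization}, take $L=1$ for $f(\vE)=\tfrac12\|\vE-\Ein\|_F^2$, and bound $\|\tDelta^\top\tDelta\|_2\le 2$ via the normalized Laplacian. The only difference is cosmetic: you derive the spectral bound explicitly through the identity $\tDelta^\top\tDelta=\vI-\tA$ and the spectrum of the self-loop--augmented normalized adjacency, whereas the paper simply cites spectral graph theory~\cite{chung1997spectral} for the same bound.
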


\begin{proof}
 It is proved in the work~\cite{loris2011generalization} that the iterations in Eq.~\eqref{eq:pd_step1}, ~\eqref{eq:pd_step2} and~\eqref{eq:pd_step3} converge to the optimal solution if the parameters satisfy $\gamma < \frac{2}{L}$ and $\beta \leq \frac{1}{\gamma \| \tDelta \tDelta^\top \|_2}$, where $L=1$ is the Lipschitz constant of $\nabla \|\vE-\Ein\|_F^2=\vE-\Ein$. Note that $\|\tL\|_2=\|\tDelta^{\top}\tDelta\|_2=\|\tDelta\tDelta^{\top}\|_2 \leq 2 $~\cite{chung1997spectral}. Therefore, to ensure convergence, it is sufficient to choose $\gamma=1 < 2$ and $\beta=\frac{1}{2} \leq \frac{1}{\gamma \|\tDelta \tDelta^\top\|_2}$.
\end{proof}

Under the parameter setting implied by Theorem~\ref{thm:convergence}, the iterative solver can be simplified and summarized in Figure~\ref{fig:gtcf}. 
\begin{figure}[!ht]
 \centering
 \vskip -0.1in
\colorbox{gray!10}{
\centering 
 \begin{minipage}{0.95\columnwidth}
\begin{numcases}{}
\bar \vE^{k+1} = \Ein - \tDelta^\top \vY^k \label{eq:cf_step1} \nonumber \\
\bar \vY^{k+1} = \vY^k + \frac{1}{2} \tDelta \bar \vE^{k+1} \label{eq:cf_step2} \nonumber \\
\vY^{k+1} = \sign(\bar \vY^{k+1}) \min (|\bar \vY^{k+1}|, \lambda)  \label{eq:cf_step3} \nonumber
 \end{numcases}
  \begin{center}
  Output: $\vE^{k+1} = \Ein - \tDelta^\top \vY^{k+1}  \label{eq:cf_step4}  \nonumber$
 \end{center}
\end{minipage}
}
\vskip -0.12in
\caption{Graph Trend Collaborative Filtering (GTCF). 
 }
\vskip -0.12in
\label{fig:gtcf}
 \end{figure}

\noindent {\textbf{Complexity analysis}.} The proposed graph trend collaborative filtering in Figure~\ref{fig:gtcf} is efficient and scalable. The major computation cost comes from three sparse matrix multiplications, including $\tDelta^\top \vY^k, \tDelta \bar \vE^{k+1}$ and $\tDelta^\top \vY^{k+1}$. Note that the last one $\tDelta^\top \vY$ only needs to be computed in the output step but not in each iteration. The computation complexity for each iteration is in the order $(|\cE| d)$ where $|\cE|$ is the number of interactions between users and items and $d$ is the dimension of embedding. Therefore, the computation complexity is in the same order as the embedding propagation in Eq.~\eqref{eq:prop_matrix}.

\subsection{Graph Trend Filtering Networks}
In this subsection, we introduce the graph trend filtering networks (GTN) which includes four major components: 1) embedding initialization; 2) embedding filtering; 3) model prediction; and 4) model training.

\noindent {\textbf{Embedding initialization}.}
Given the general situations where only the user-item interaction data is available while there is no features for the user or item, we first randomly initialize the user and item embeddings:
$$\Ein = [\underbrace{\ve^0_1,\cdots,\ve^0_n}_{\text{users embeddings}}, \underbrace{\ve^0_{n+1},\cdots,\ve^0_{n+m}}_{\text{item embeddings}}]^\top \in \RR^{(n+m)\times d}.$$

In some applications where the features of users or items are available, we can either use the given features or concatenate them with randomly initialized embeddings. 

\noindent {\textbf{Embedding Filtering}.}
Given the initialized user and item embeddings $\Ein$, we apply the proposed graph trend collaborative filtering (GTCF as showed in Figure~\ref{fig:gtcf}) on $\Ein\in \RR^{(n+m)\times d}$ with $K$ iterations, such that we can extract the final representations of users and items: 
$$\vE^K = [\underbrace{\ve^K_1,\cdots,\ve^K_n}_{\text{users embeddings}}, \underbrace{\ve^K_{n+1},\cdots,\ve^K_{n+m}}_{\text{item embeddings}}]^\top \in \RR^{(n+m)\times d}$$
which encodes the user-item interaction information and follows the embedding smoothness assumption in Eq.~\eqref{eq:gtf_obj}.

\noindent {\textbf{Model Prediction}.} 
Finally, the model prediction on whether a user $u$ is interested in an item $i$ is defined as the inner product between their final representations $\vE^K$: 
\begin{equation}
    \hat{y}_{ui} = {\ve^K_u}^T \ve^K_i. \nonumber
\end{equation}
 The prediction will be used as the ranking score for recommendation generation.

\noindent  \textit{\textbf{Model Training}.} 
The model parameters to be learned are the initial user and item embeddings $\Ein$. To train a good set of model parameters of our proposed method, we need to identify an objective function to optimize.  In this paper, we adopt the pairwise BPR loss~\cite{Rendle2009BPRBP} to optimize model parameters based on the implicit feedback in recommendations, assuming that score predictions on the observed interactions should be higher than the unobserved interactions: 
\begin{align}
 \cL_{\text{BPR}} = \sum_{(u,i,j) \in O} -\ln \sigma(\hat{y}_{ui}-\hat{y}_{uj}) + \alpha \|\Ein\|_F^2, \nonumber
\end{align}
where $\alpha$ controls the $L_2$ regularization to reduce the risk of overfitting. $\sigma(\cdot)$ is the Sigmoid function. The set $ O=\{(u,i,j)|(u,i)\in \cE, (u,j) \notin \cE \}$ denotes the pairwise training data, where $(u,i) \in \cE$ indicates that the interactions between user $v_u$ and item $v_j$ is observed (positive), while $(u,j) \notin \cE$ indicates that the interaction between user $v_u$ and item $v_j$ is not observed (negative).

\section{Experiment}
\label{sec:Experiments}
 
 

 \subsection{\textbf{Experimental Settings}}

\subsubsection{\textbf{Datasets}}
To demonstrate the effectiveness of the proposed method, we conduct comprehensive experiments on four widely used benchmark datasets in recommender systems, including Gowalla, Yelp2018, Amazon-Book, and LastFM. The first three datasets are released by NGCF~\cite{Wang2019NeuralGC} and LightGCN~\cite{He2020LightGCNSA}; the LastFM dataset is provided by  KGAT~\cite{wang2019kgat}. 
For fair comparison, we closely follow the experimental setting of LightGCN\footnote{https://github.com/gusye1234/LightGCN-PyTorch\label{fn:lightgcn}}. In the training phase, each observed user-item interaction is treated as a positive instance, while we conduct negative sampling  to randomly sample one unobserved item and pair it with the user as a negative instance.
The statistical summary of these four datasets can be found in Table~\ref{tab:dataset}.  

\begin{table}[htbp]
\centering
\caption{Basic statistics of benchmark datasets.}
\label{tab:dataset}
\vskip -0.1050in
\scalebox{0.7}
{
\begin{tabular}{|c||c|c|c|}
\hline
\multirow{2}{*}{\textbf{Datasets}} & \multicolumn{3}{c|}{\textbf{User-Item Interaction}}           \\ \cline{2-4} 
                                   & \textbf{\#Users} & \textbf{\#Items} & \textbf{\#Interactions} \\ \hline 
\textbf{Gowalla}                   & 29,858          & 40,981          & 1,027,370               \\ \hline
\textbf{Yelp2018}                  & 31,668          & 38,048          & 1,561,406               \\ \hline
\textbf{Amazon-Book}               & 52,643          & 91,599          & 2,984,108               \\ \hline
\textbf{LastFM}                    & 23,566           & 48,123           & 3,034,796               \\ \hline
\end{tabular}
}
\vskip -0.120in
\end{table}

\begin{table*}[htbp]
\centering
\vskip -0.20in
\caption{The comparison of overall performance.}
\vskip -0.1050in
\label{tab:comparsion_all}
\scalebox{0.80}
{
\begin{tabular}{|c|c|c|c|c|c|c|c|c|c|}
\hline
\multicolumn{2}{|c|}{\textbf{Datasets}}              & \multicolumn{2}{c|}{\textbf{Gowalla}} & \multicolumn{2}{c|}{\textbf{Yelp2018}} & \multicolumn{2}{c|}{\textbf{Amazon-Book}} & \multicolumn{2}{c|}{\textbf{LastFM}}  \\ \hline
\multicolumn{2}{|c|}{\textbf{Metrics}}               & \textbf{Recall@20} & \textbf{NDCG@20} & \textbf{Recall@20}  & \textbf{NDCG@20} & \textbf{Recall@20}   & \textbf{NDCG@20}   & \textbf{Recall@20} & \textbf{NDCG@20} \\ \hline \hline
\multirow{8}{*}{\textbf{Method}} & \textbf{MF}       & 0.1299             & 0.111            & 0.0436              & 0.0353           & 0.0252               & 0.0198             & 0.0725             & 0.0614           \\ \cline{2-10} 
                                 & \textbf{NeuCF}    & 0.1406             & 0.1211           & 0.045               & 0.0364           & 0.0259               & 0.0202             & 0.0723             & 0.0637           \\ \cline{2-10} 
                                 & \textbf{GC-MC}    & 0.1395             & 0.1204           & 0.0462              & 0.0379           & 0.0288               & 0.0224             & 0.0804             & 0.0736           \\ \cline{2-10} 
                                 & \textbf{NGCF}     & 0.156              & 0.1324           & 0.0581              & 0.0475           & 0.0338               & 0.0266             & 0.0774             & 0.0693           \\ \cline{2-10} 
                                 & \textbf{Mult-VAE} & 0.1641             & 0.1335           & 0.0584              & 0.045            & 0.0407               & 0.0315             & 0.078              & 0.07             \\ \cline{2-10} 
                                 & \textbf{DGCF}     & 0.1794             & 0.1521           & 0.064               & 0.0522           & 0.0399               & 0.0308             & 0.0794             & 0.0748           \\ \cline{2-10} 
                                 & \textbf{LightGCN} & 0.1823             & 0.1553           & 0.0649              & 0.0525           & 0.042                & 0.0327             & 0.085              & 0.076            \\ \cline{2-10} 
                                 & \textbf{\ourname{} (Ours)}      & 0.187              & 0.1588           & 0.0679              & 0.0554           & 0.045                & 0.0346             & 0.0932             & 0.0857           \\ \hline
\multicolumn{2}{|c|}{\textbf{Relative Improvement (\%)}}                & 2.59               & 2.26             & 4.62                & 5.59             & 7.15                 & 5.95               & 9.61               & 12.77            \\ \hline
\end{tabular}
}
\vskip -0.150in
\end{table*}

\subsubsection{\textbf{Baselines}}

To evaluate the effectiveness, we compare our proposed method with the following baselines:

\begin{itemize}[leftmargin=*] 
\item MF~\cite{Rendle2009BPRBP}: This is the most classic CF method optimized by the Bayesian Personalized Ranking (BPR) loss.

\item NeuCF~\cite{He2017NeuralCF}: This method is the very first DNNs based collaborative filtering which stacks multiple hidden layers to capture their non-linear user and item interactions.

\item GC-MC~\cite{berg2017graph}:  The method proposes  a graph auto-encoder framework for recommendations and  generates representations of  users  and  items  through  a  form  of  differentiable  message   passing   on   the   user-item  graph. 

\item Mult-VAE~\cite{Liang2018VariationalAF}\footnote{https://github.com/dawenl/vae\_cf}: This method  develops a variant of Variational AutoEncoder (VAE) for item-based  CF. 

\item NGCF~\cite{Wang2019NeuralGC}:  This method is the very first GNNs based CF to incorporate the high-order connectivity of user-item interactions for recommendations.

\item DGCF~\cite{Wang2020DisentangledGC}\footnote{https://github.com/xiangwang1223/disentangled\_graph\_collaborative\_filtering}: This method is a disentangled CF to learn representations for different latent user intentions via GNNs, so as to  exploit diverse relationships among users and items.

\item LightGCN~\cite{He2020LightGCNSA}\textsuperscript{\ref{fn:lightgcn}}:  The method is a state-of-the-art CF based on GNNs, which is an extension of NGCF~\cite{Wang2019NeuralGC} by removing feature transformation and nonlinear activation, achieving the state-of-the-art performance for recommendations. 
\end{itemize}

Note that in this work, we don't compare with SGL~\cite{wu2021self}, one of the state-of-art recommendation methods with self-supervised learning (SSL) enhanced on loss objective. 
What's more, the contribution of SGL is totally orthogonal to the contributions in our work, since we are designing a new collaborative filter for recommendations, regardless of the loss function. We will explore combination between our proposed GTN and SSL in the future.


\subsubsection{\textbf{Evaluation Metrics}}

In order to evaluate the quality of recommendation results for ranking task, we adopt two widely used evaluation metrics: Recall@$K$ and Normalized Discounted Cumulative Gain (NDCG@$K$)~\cite{Wang2019NeuralGC,He2020LightGCNSA}. 
By default, we set the value of $K$ as 20. Note that higher values of Recall@$K$ and NDCG@$K$ indicate better performance for recommendations. We report the average metrics for all users in the test set.

\subsubsection{\textbf{Parameter Settings}}

We implement our proposed model based on PyTorch\footnote{\url{https://github.com/wenqifan03/GTN-SIGIR2022}}.
For  embedding size $d$, we tested the  value of $\left \{16, 32, 64, 128, 256, 512  \right \}$.  The batch size and learning rate were searched in $\left \{ 128, 512, 1024, 2048  \right \}$ and $\left \{0.0005, 0.001, 0.005, 0.01, 0.05, 0.1 \right \}$, respectively.   We employ the Adam~\cite{Adam}  to optimize the objective function in a mini-batch manner, where we  first randomly sample mini-batch observed entries and then we pair it with an unobserved item $j$ that is uniformly sampled from $\cN(v_u)$  for each observed  interaction $(u,i)$. 
Note that we adopt the default hyper-parameters as suggested by the corresponding papers for all baselines, and  we closely follow the settings of the NGCF and LightGCN\textsuperscript{\ref{fn:lightgcn}} for a fair comparison.

\subsection{\textbf{Performance Comparison of Recommender Systems}}

We first compare the performance of all recommendation methods. Table~\ref{tab:comparsion_all} reports the overall performance comparison with regard to Recall@20 and NDCG@20 metrics on four datasets. We make the following observations: 
\begin{itemize}[leftmargin=*] 
\item   \vspace{-0.05in}
Our proposed \ourname{} achieves the best performance, and consistently outperforms all the baselines across all datasets in terms of all metrics. 
For instance, it significantly improves the strongest baselines by 12.77\% on NDCG@20 and  9.61\% on Recall@20 in LastFM dataset. 
Such improvement demonstrates the effectiveness of our proposed method and we attribute it to the imposed local adaptive smoothness over the user-item interactions. The local adaptive smoothness greatly improves  the adaptiveness of the aggregation scheme in GNN-based models and alleviates the vulnerability to noisy user-item interactions. Hence, it justifies the great potential of studying graph trend filtering in recommender systems.

\item As an earlier CF method, Matrix Factorization (MF) shows relatively poor performance for recommendations, while others like NeuCF and GNN-based methods show much stronger performance. This suggests the powerful capability on representation learning of deep neural networks techniques in recommendations. 

\item The results of NGCF and DGCF are at the same level, being better than MF and NeuCF methods. It demonstrates the importance of modeling the local structure of a node in the user-item graph to enhance the representation learning of users and items. By stacking propagation layers, most GNN-based recommendation methods can effectively capture collaborative signals via high-order connectivity.

\item  
LightGCN is a strong baseline as it almost beats MF, NeuCF, GC-MC, NGCF, Mult-VAE, and DGCF in all datasets. It inherits the characteristics of NGCF as well as the embedding propagation scheme of GNNs.
However, during the embedding propagation process, LightGCN tends to enforce the user embedding and item embedding to be close if there exist interactions between them. It can result in inappropriate smoothness caused by unreliable interactions such as unintended clicks, thus leading to sub-optimal performance. 
On the contrary, the proposed \ourname{} provides adaptive embedding smoothness over the user-item interactions by taking advantage of graph trend filtering. As a result, our proposed method  routinely achieves better recommendation performance than LightGCN in terms of Recall@20 and NDCG@20 metrics on four datasets.  

\end{itemize}

\begin{table*}[htbp]
\centering
\vskip -0.150in
\caption{The effect of the number of layers.}
\vskip -0.150in

\label{tab:num_layers}
\scalebox{0.8}
{
\begin{tabular}{|c|c|c|c|c|c|c|c|c|c|}
\hline
\multicolumn{2}{|c|}{\textbf{Datasets}}                        & \multicolumn{2}{c|}{\textbf{Gowalla}} & \multicolumn{2}{c|}{\textbf{Yelp2018}} & \multicolumn{2}{c|}{\textbf{Amazon-Book}} & \multicolumn{2}{c|}{\textbf{LastFM}}  \\ \hline
\multicolumn{2}{|c|}{\textbf{Method}}                          & \textbf{Recall@20} & \textbf{NDCG@20} & \textbf{Recall@20}  & \textbf{NDCG@20} & \textbf{Recall@20}   & \textbf{NDCG@20}   & \textbf{Recall@20} & \textbf{NDCG@20} \\ \hline \hline
\multirow{3}{*}{\textbf{1 Layer}} & \textbf{NGCF}     & 0.1548             & 0.1306           & 0.0544              & 0.0439           & 0.031                & 0.024              & 0.0751             & 0.0671           \\ \cline{2-10} 
                                  & \textbf{LightGCN} & 0.1753             & 0.149            & \textbf{0.0628}              & \textbf{0.0517}           & 0.0381               & 0.0298             & 0.0865             & 0.0777           \\ \cline{2-10} 
                                  & \textbf{\ourname{} (Ours)}      & \textbf{0.1789}             & \textbf{0.1499 }          & 0.06              & 0.0493           &\textbf{ 0.0443}               & \textbf{0.0341}             & \textbf{0.0888}             & \textbf{0.0806}           \\ \hline\hline
\multirow{3}{*}{\textbf{2 Layer}} & \textbf{NGCF}     & 0.1525             & 0.1313           & 0.0569              & 0.0463           & 0.0328               & 0.0252             & 0.0766             & 0.0681           \\ \cline{2-10} 
                                  & \textbf{LightGCN} & 0.1752             & 0.1524           & 0.0622              & 0.0505           & 0.0413               & 0.0312             & 0.0874             & 0.0788           \\ \cline{2-10} 
                                  & \textbf{\ourname{} (Ours)}      & \textbf{0.1863}             & \textbf{0.1576 }          & \textbf{0.0647}              & \textbf{0.0528}           & \textbf{0.0455}               & \textbf{0.0351 }            & \textbf{0.0913}             &\textbf{ 0.0831 }          \\ \hline\hline
\multirow{3}{*}{\textbf{3 Layer}} & \textbf{NGCF}     & 0.156              & 0.1324           & 0.0581              & 0.0475           & 0.0338               & 0.0266             & 0.0774             & 0.0693           \\ \cline{2-10} 
                                  & \textbf{LightGCN} & 0.1823             & 0.1553           & 0.0649              & 0.0525           & 0.042                & 0.0327             & 0.085              & 0.076            \\ \cline{2-10} 
                                  & \textbf{\ourname{} (Ours)}      & \textbf{0.187}              & \textbf{0.1588 }          & \textbf{0.0679}              & \textbf{0.0554}           & \textbf{0.045}                & \textbf{0.0346 }            & \textbf{0.0932 }            &\textbf{ 0.0857 }          \\ \hline\hline
\multirow{3}{*}{\textbf{4 Layer}} & \textbf{NGCF}     & 0.1564             & 0.1321           & 0.0568              & 0.0461           & 0.0344               & 0.0261             & 0.0726             & 0.0652           \\ \cline{2-10} 
                                  & \textbf{LightGCN} & 0.1818             & \textbf{0.1532}           & 0.065               & 0.0523           & 0.0404               & 0.0311             & 0.0822             & 0.0736           \\ \cline{2-10} 
                                  & \textbf{\ourname{} (Ours)}      & \textbf{0.1842}             & {0.1531}           &\textbf{ 0.0663 }             & \textbf{0.0535}           & \textbf{0.0418}               & \textbf{0.0322 }            & \textbf{0.0928}             & \textbf{0.085}            \\ \hline\hline
\multirow{3}{*}{\textbf{5 Layer}} & \textbf{NGCF}     & 0.1465             & 0.122            & 0.0557              & 0.0452           & 0.0322               & 0.0247             & 0.067              & 0.0607           \\ \cline{2-10} 
                                  & \textbf{LightGCN} & 0.1742             & 0.1464           & 0.0627              & 0.0509           & 0.0388               & 0.0299             & 0.0768             & 0.0685           \\ \cline{2-10} 
                                  & \textbf{\ourname{} (Ours)}      & \textbf{0.1789 }            & \textbf{0.1475}           &\textbf{ 0.0637}              & \textbf{0.0511}           & \textbf{0.0391}               & \textbf{0.03 }              & \textbf{0.0921}             & \textbf{0.0843}           \\ \hline
\end{tabular}
}
\vskip -0.1050in
\end{table*}

\subsection{\textbf{Study of Robustness and  Adaptive Smoothness }}

\subsubsection{\textbf{Robustness against random perturbation}}
As the user-item graph may contain a large portion of unreliable or noisy edges (e.g., random or bait clicks) which cannot reflect the actual  satisfaction and preferences of users in real-world systems, developing a robust GNN-based recommender system against such noise is essential in providing high-quality recommendations and improving user online experience.

We now investigate the robustness of different GNN-based recommendation methods under different severity of noise, i.e., different ratios of noisy edges in the user-item graph. Specifically, we perturb the test data by randomly injecting a certain ratio of edges and evaluate the recommendation performance of the models trained on clean data. In addition, we vary the ratio of noisy edges in the range of \{5\%, 6\%, 8\%, 10\%, 15\%, 20\%, 30\%, 50\%\}. Note that we also call the aforementioned noise ratio as perturbation rate.

The main results are shown in Figure~\ref{fig:perturbation_rates}. From the figure, we can see that our proposed method \ourname{} consistently outperforms all other baselines (i.e., NGCF and LightGCN), implying that our method can effectively resist random noises in user-item graph for recommendations. This promising performance can be attributed to the consideration of adaptive embedding smoothness over user-item interactions that are modeled by the proposed graph trend collaborative filtering. It is worth noting that 
under higher noise ratio, \ourname{} method even outperforms the baselines by a larger margin. Specifically, the performance of LightGCN significantly drops when the perturbation rate is larger than 20\% especially on Gowalla and Yelp2018; the performance of NGCF is almost the worst though relatively stable against random perturbation. On the contrary, \ourname{} maintains its high performance across different noise ratios, which demonstrates its superior robustness.

\begin{figure*}[htbp]
\vskip -0.1050in

\centering
{\subfigure[Gowalla - Recall@20]
{\includegraphics[width=0.22\linewidth]{{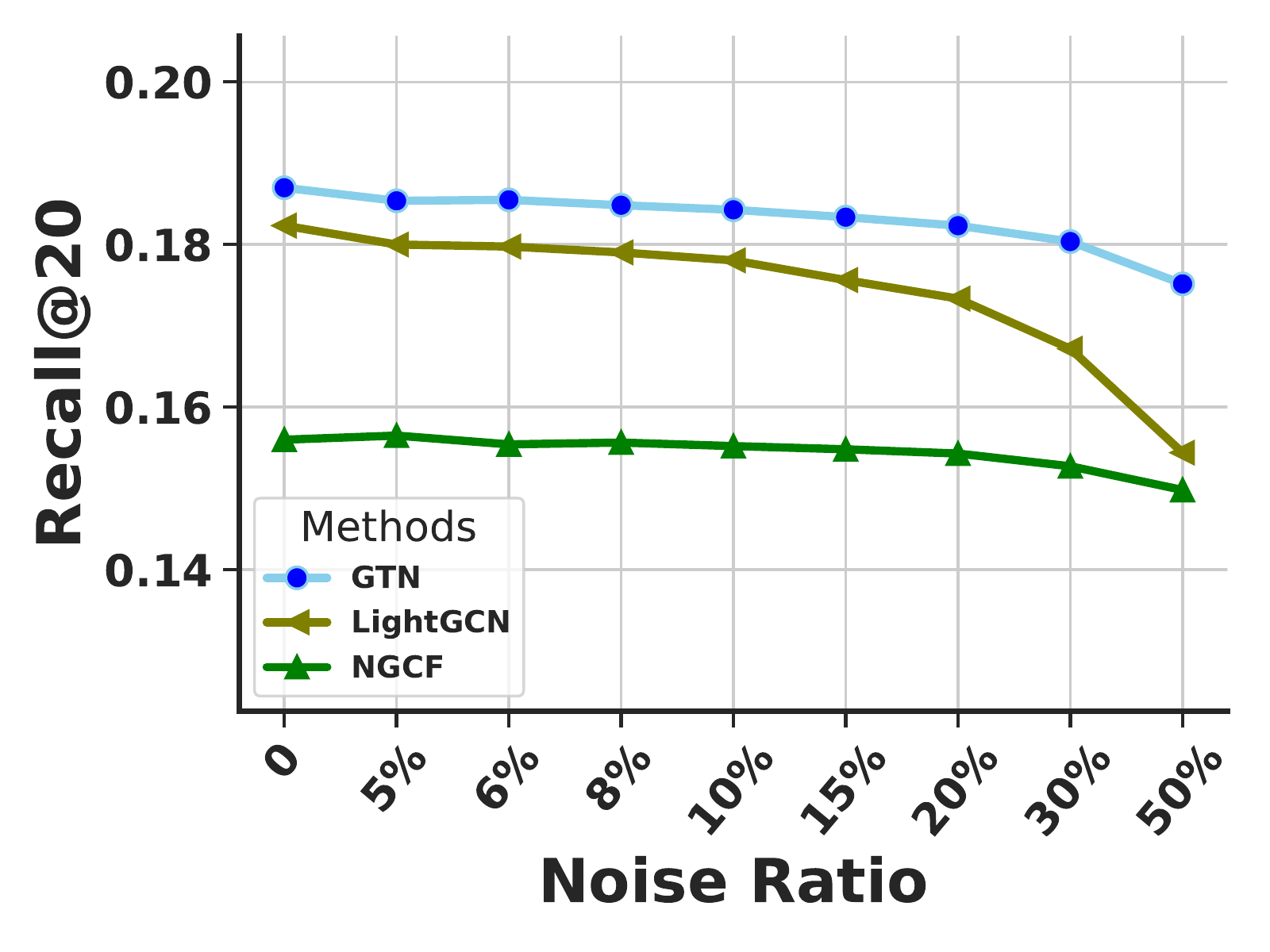}}}}
{\subfigure[Yelp2018 - Recall@20]
{\includegraphics[width=0.22\linewidth]{{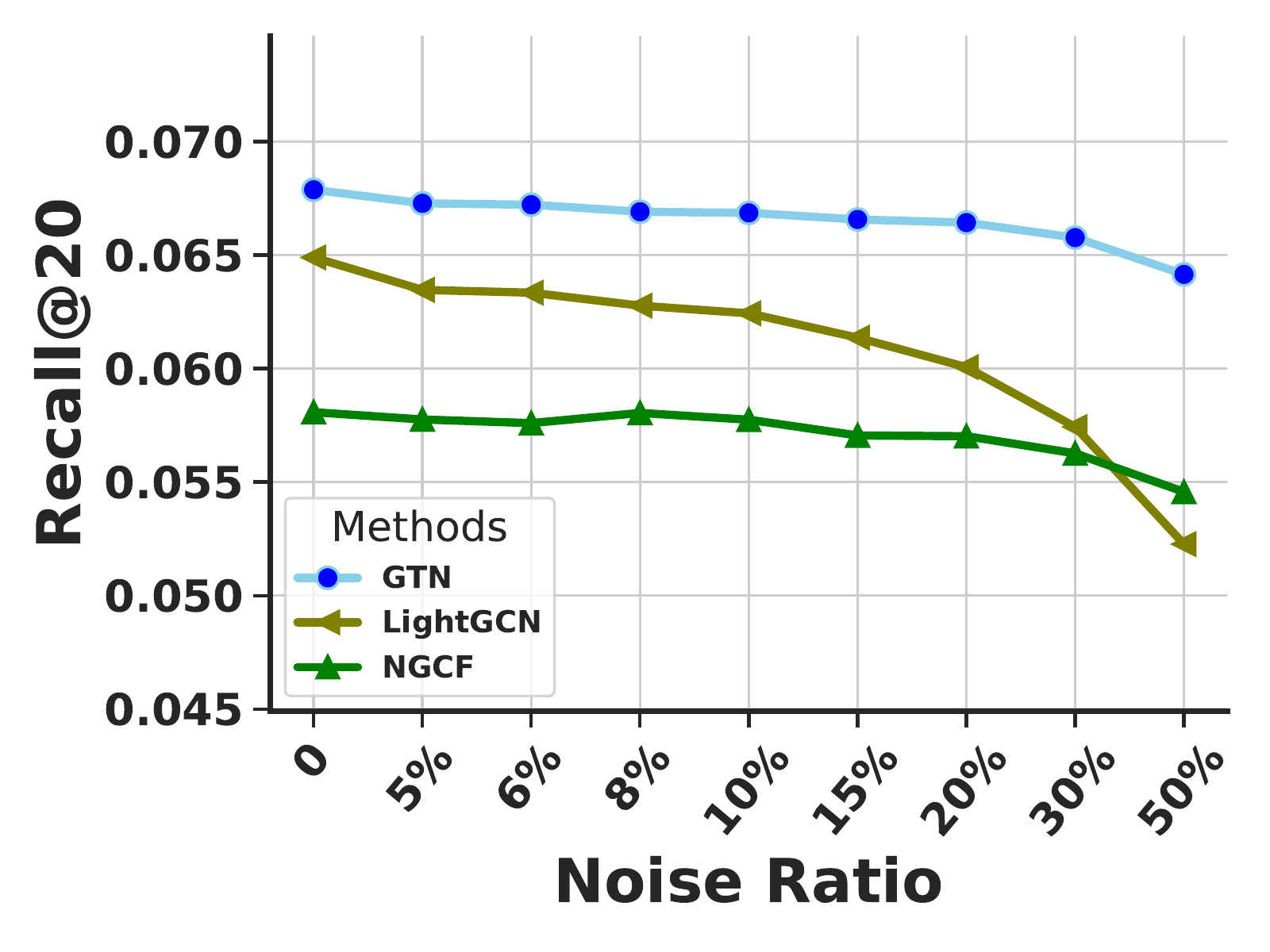}}}}
{\subfigure[Amazon-book - Recall@20]
{\includegraphics[width=0.22\linewidth]{{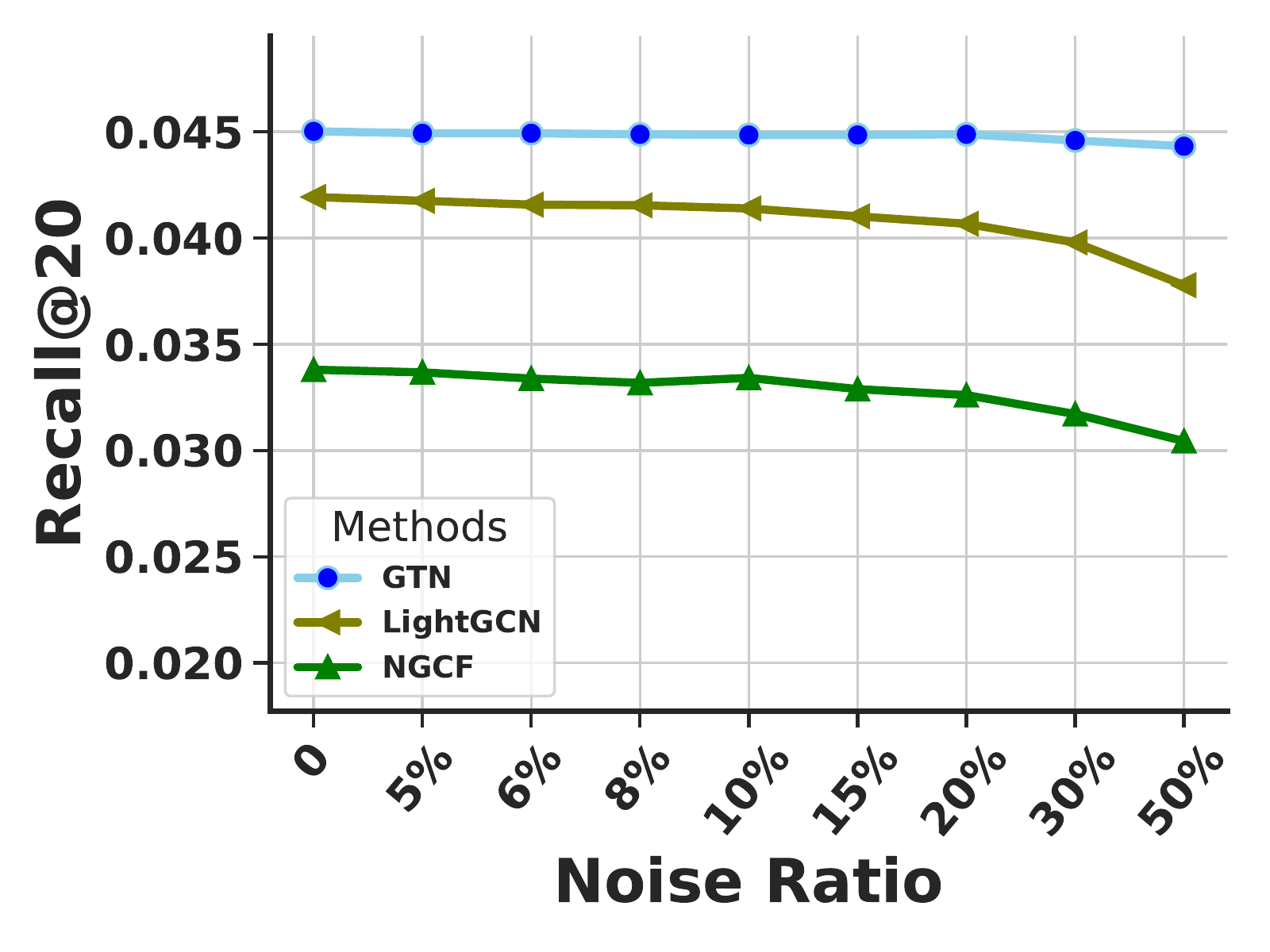}}}}
{\subfigure[LastFM - Recall@20]
{\includegraphics[width=0.22\linewidth]{{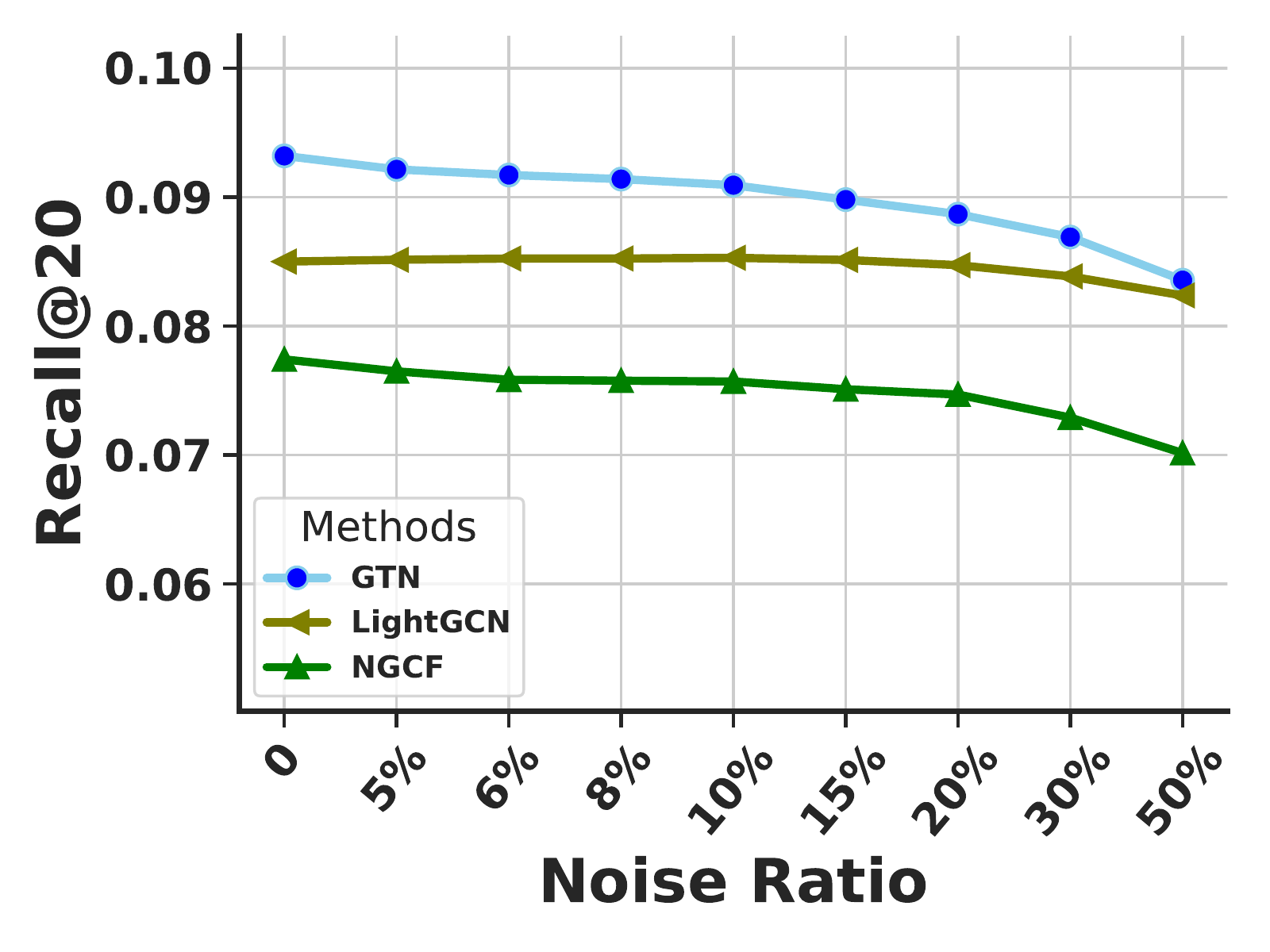}}}}

{\subfigure[Gowalla - NDCG@20]
{\includegraphics[width=0.22\linewidth]{{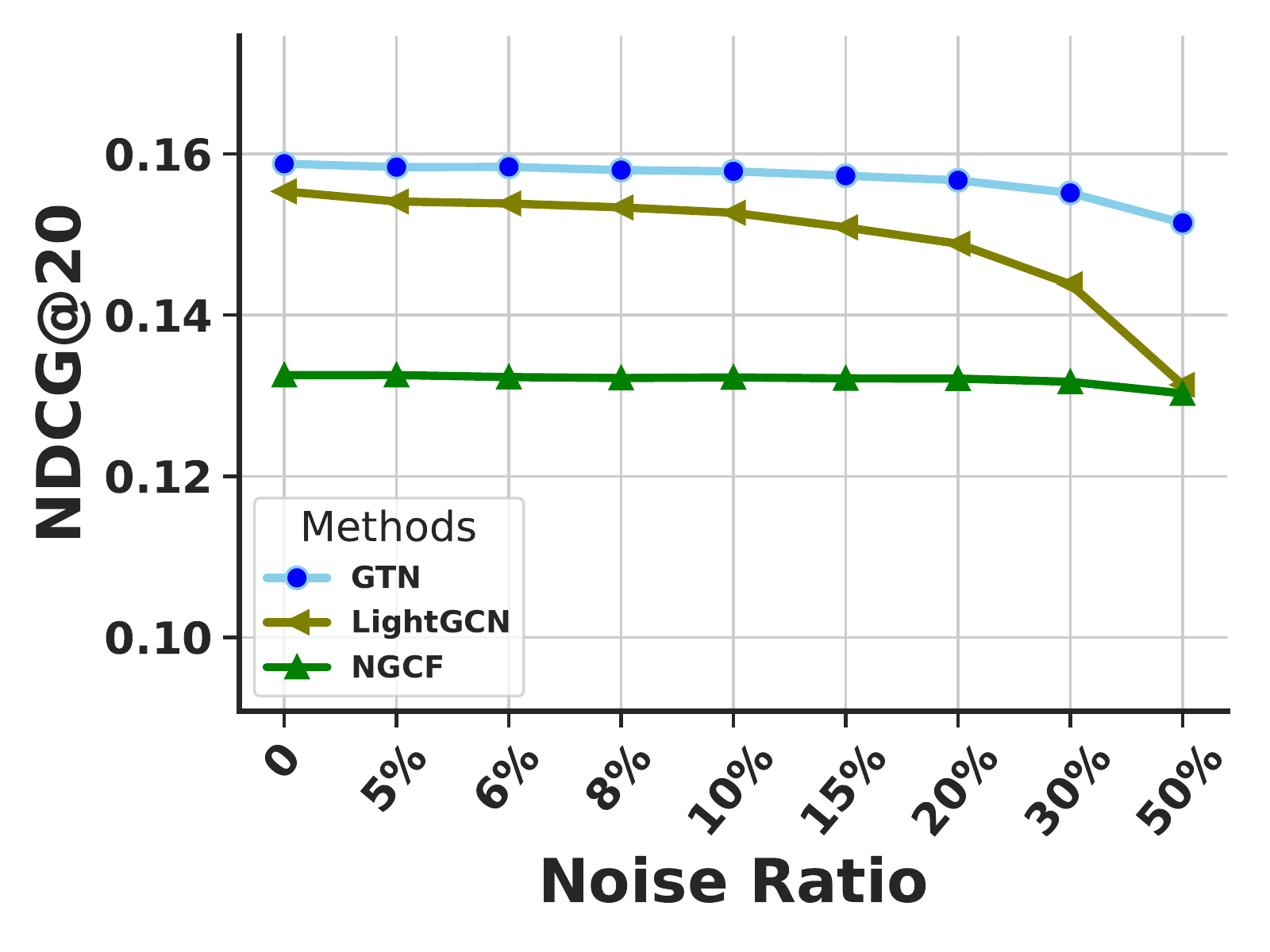}}}}
{\subfigure[Yelp2018 - NDCG@20]
{\includegraphics[width=0.22\linewidth]{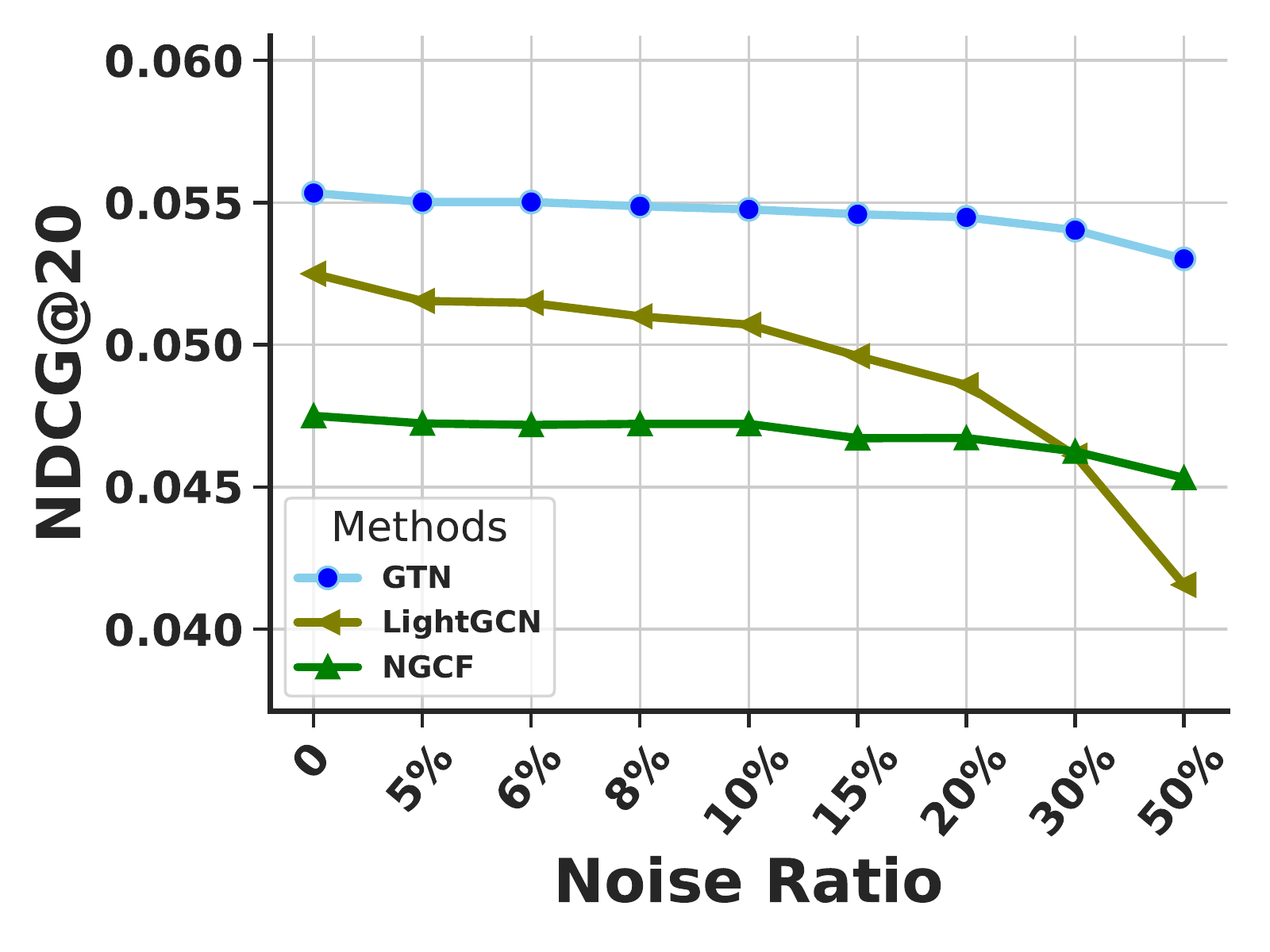}}}
{\subfigure[Amazon-book - NDCG@20]
{\includegraphics[width=0.22\linewidth]{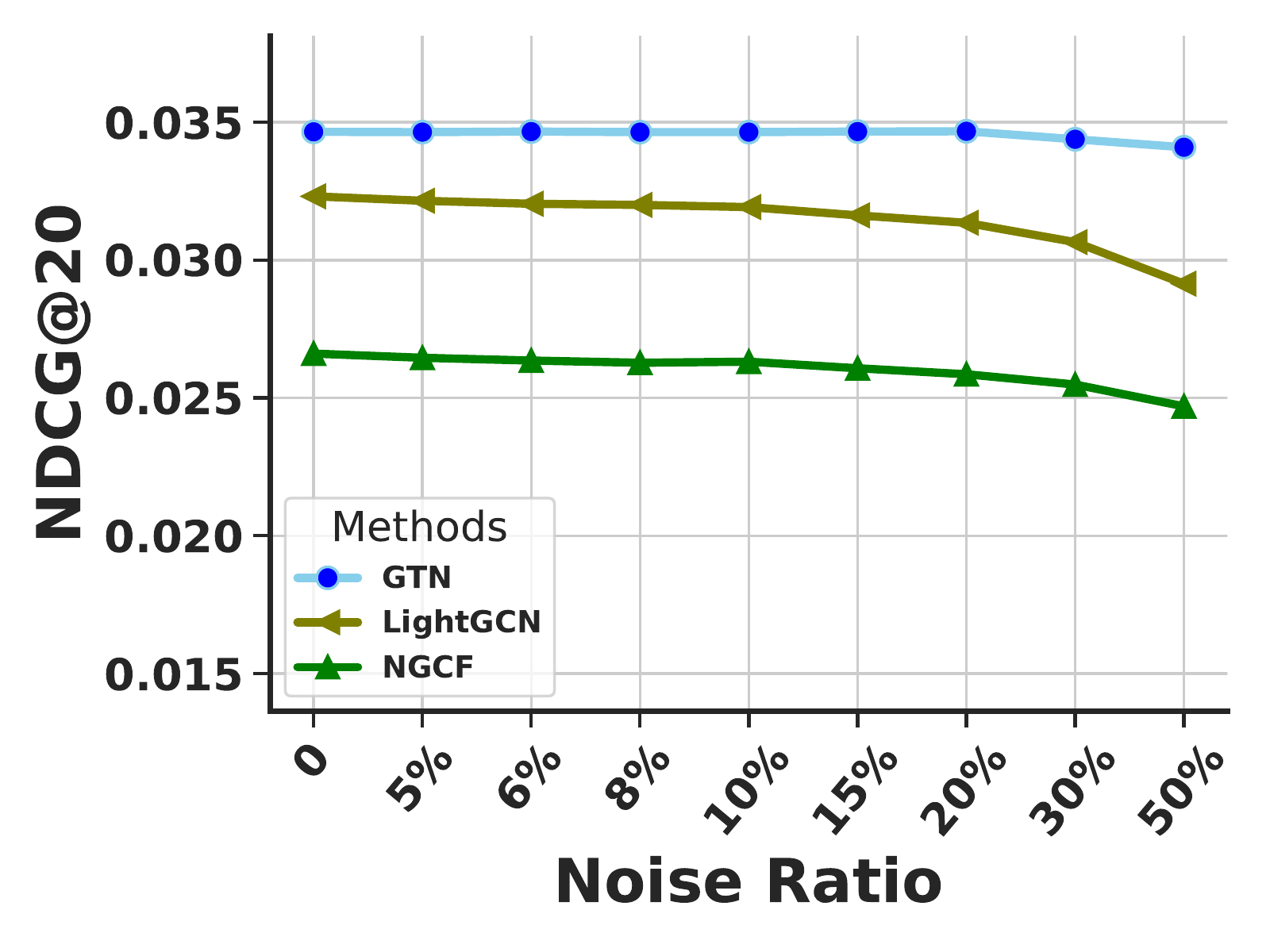}}}
{\subfigure[LastFM - NDCG@20]
{\includegraphics[width=0.22\linewidth]{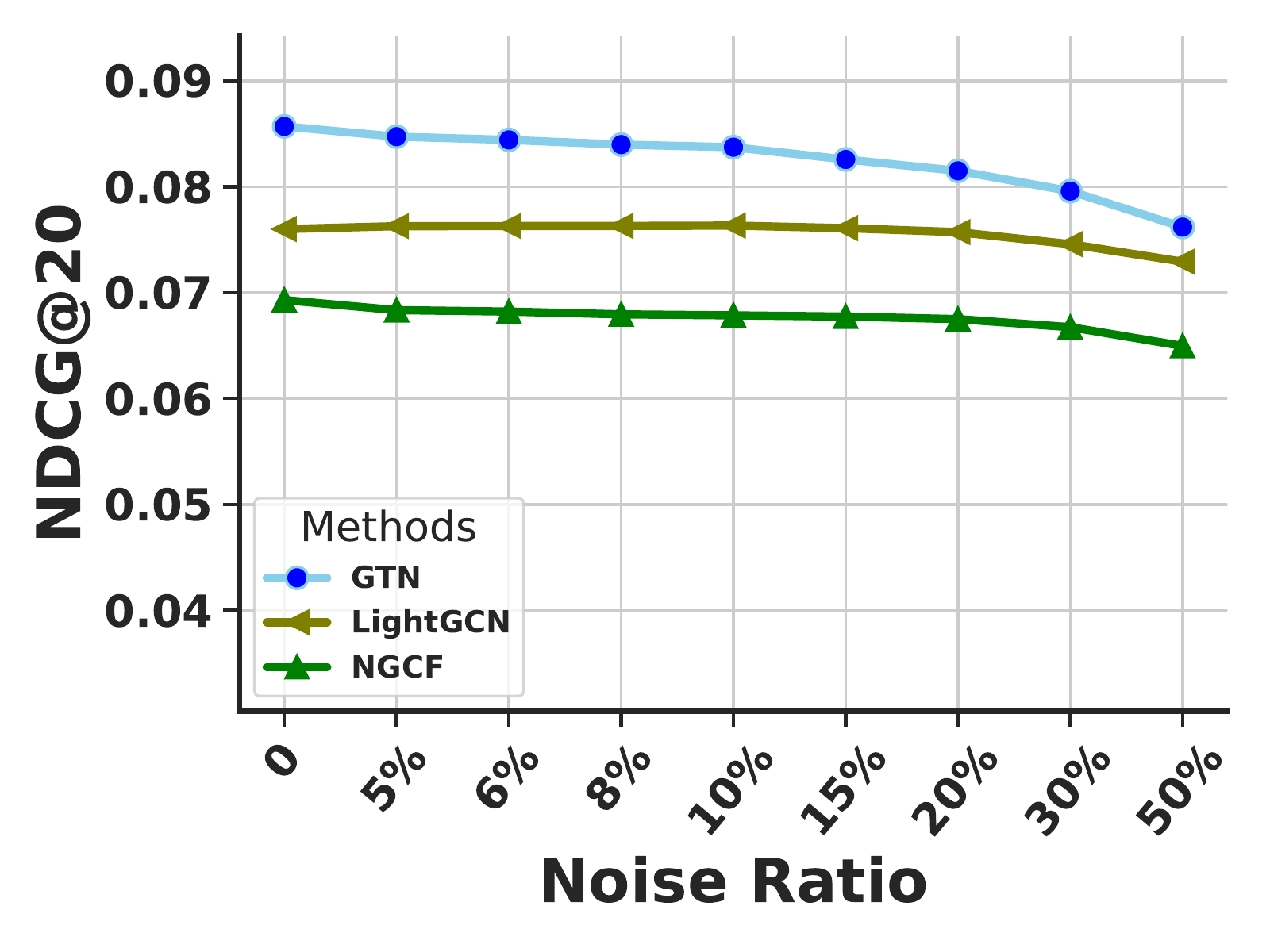}}}
\vskip -0.2in
\caption{Recommendation performance under different perturbation rates.}\label{fig:perturbation_rates}
\vskip -0.1in
\end{figure*}

\begin{figure*}[htbp]
\vskip -0.1in
\centering
{\subfigure[Gowalla - Recall@20]
{\includegraphics[width=0.22\linewidth]{{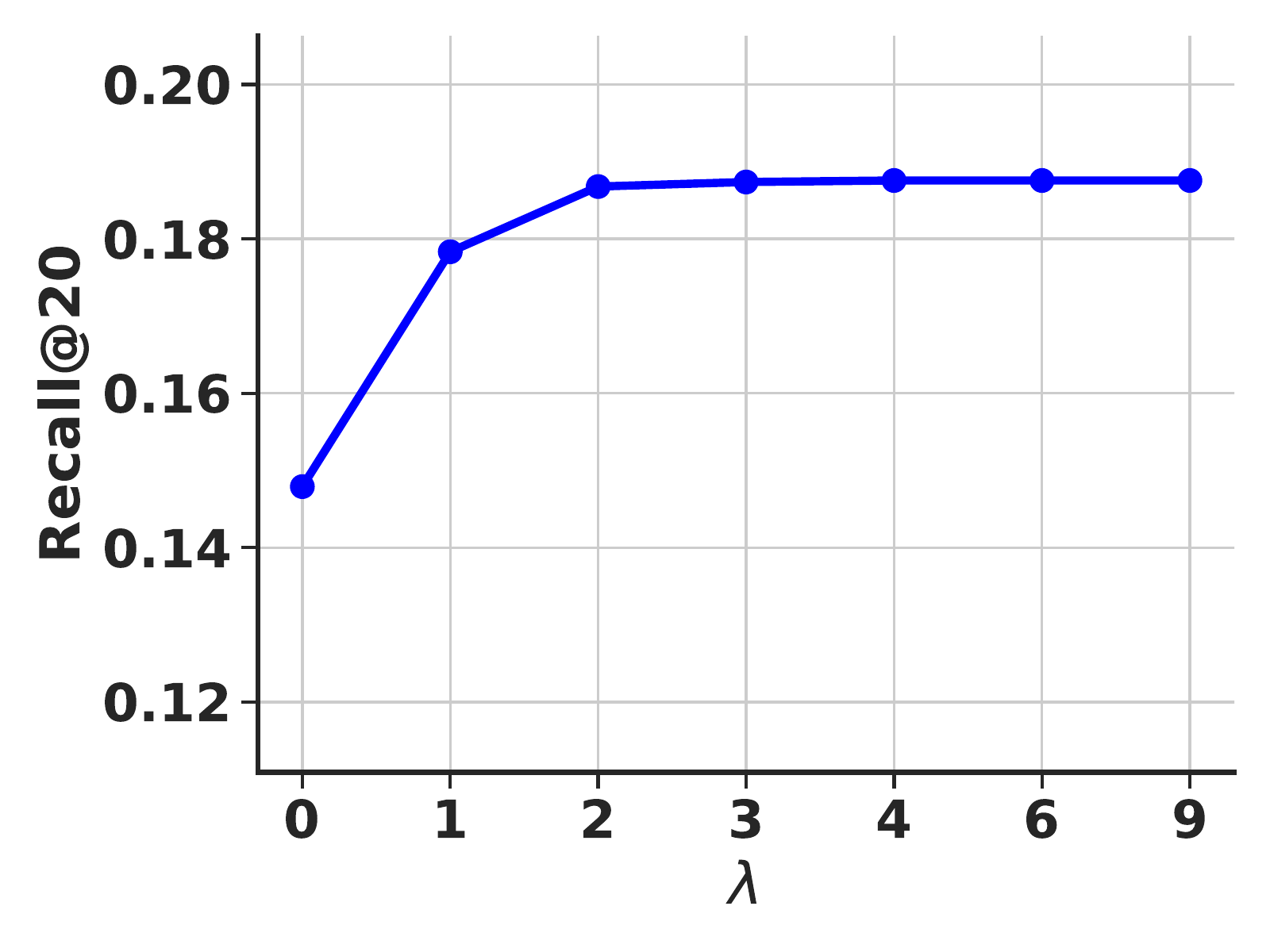}}}}
{\subfigure[Yelp2018 - Recall@20]
{\includegraphics[width=0.22\linewidth]{{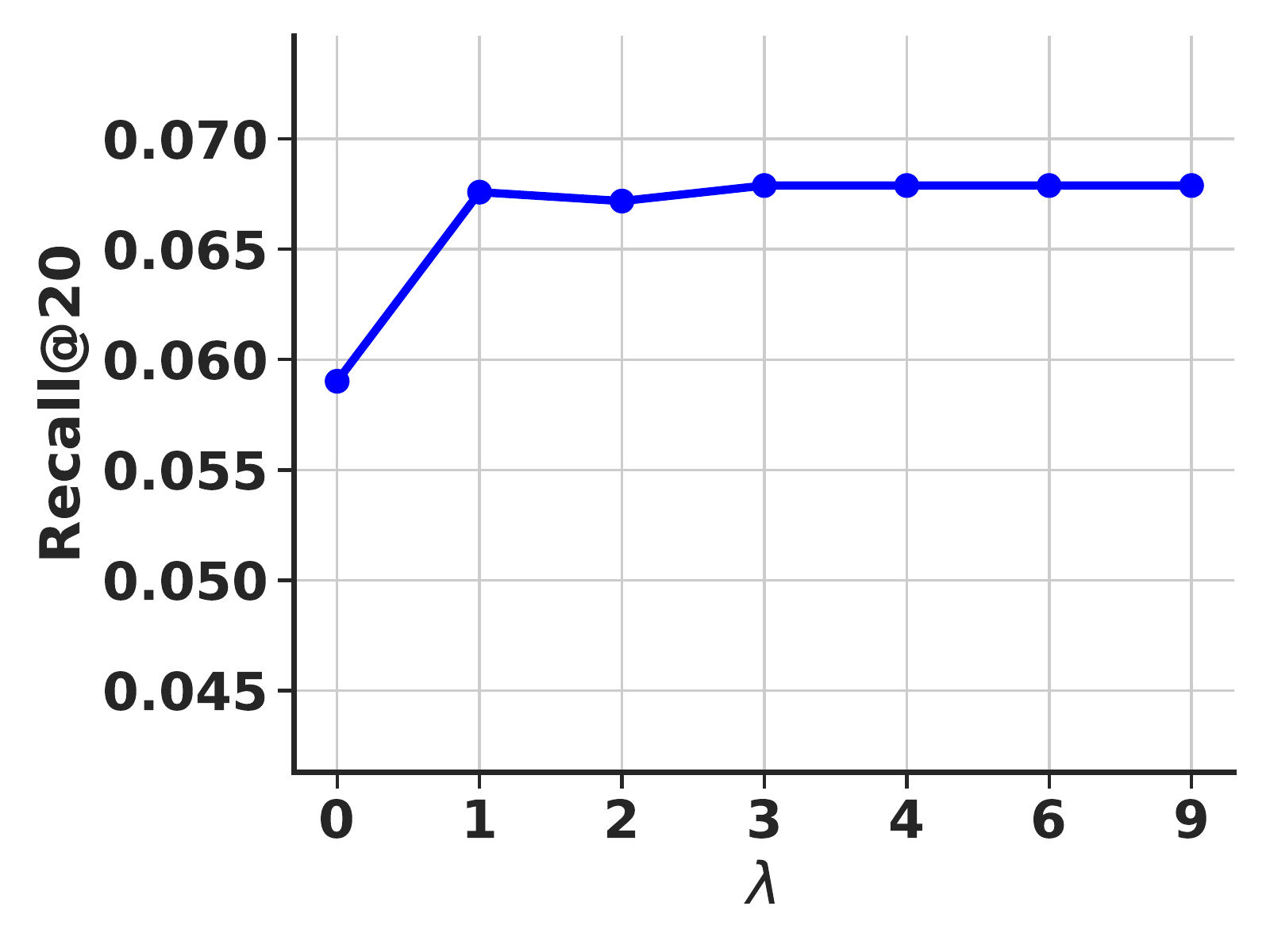}}}}
{\subfigure[Amazon-book - Recall@20]
{\includegraphics[width=0.22\linewidth]{{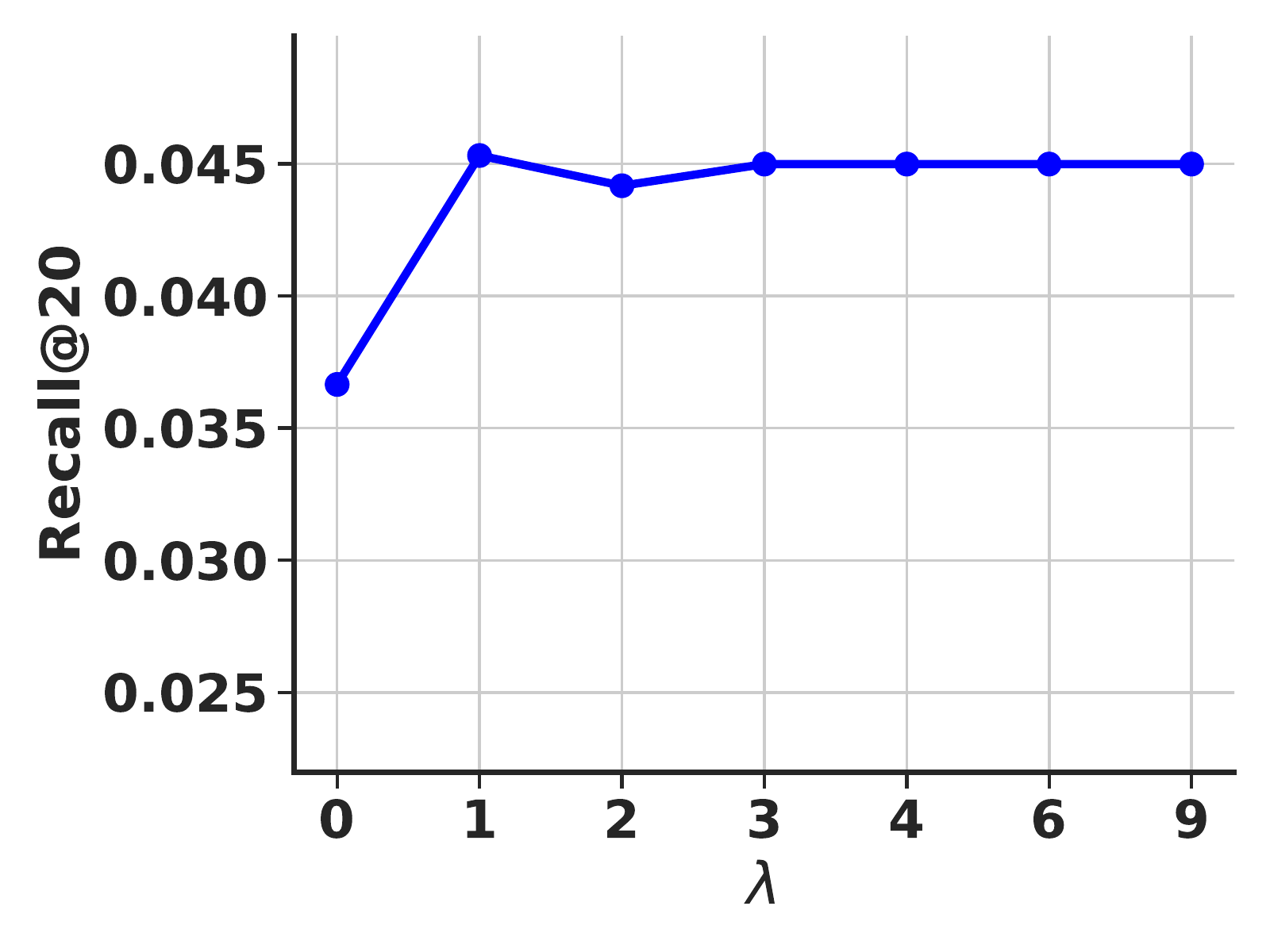}}}}
{\subfigure[LastFM - Recall@20]
{\includegraphics[width=0.22\linewidth]{{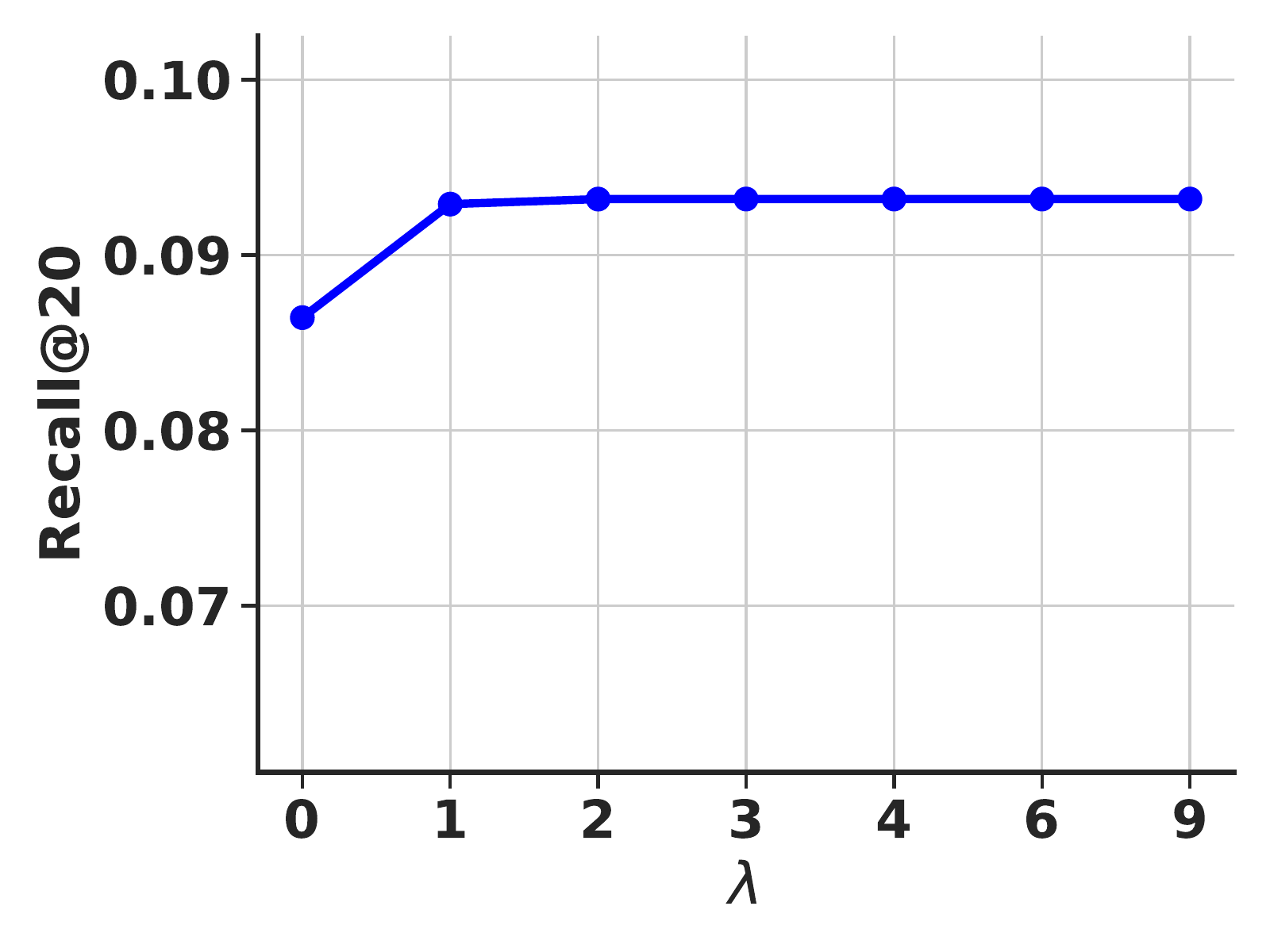}}}}
{\subfigure[Gowalla - NDCG@20]
{\includegraphics[width=0.22\linewidth]{{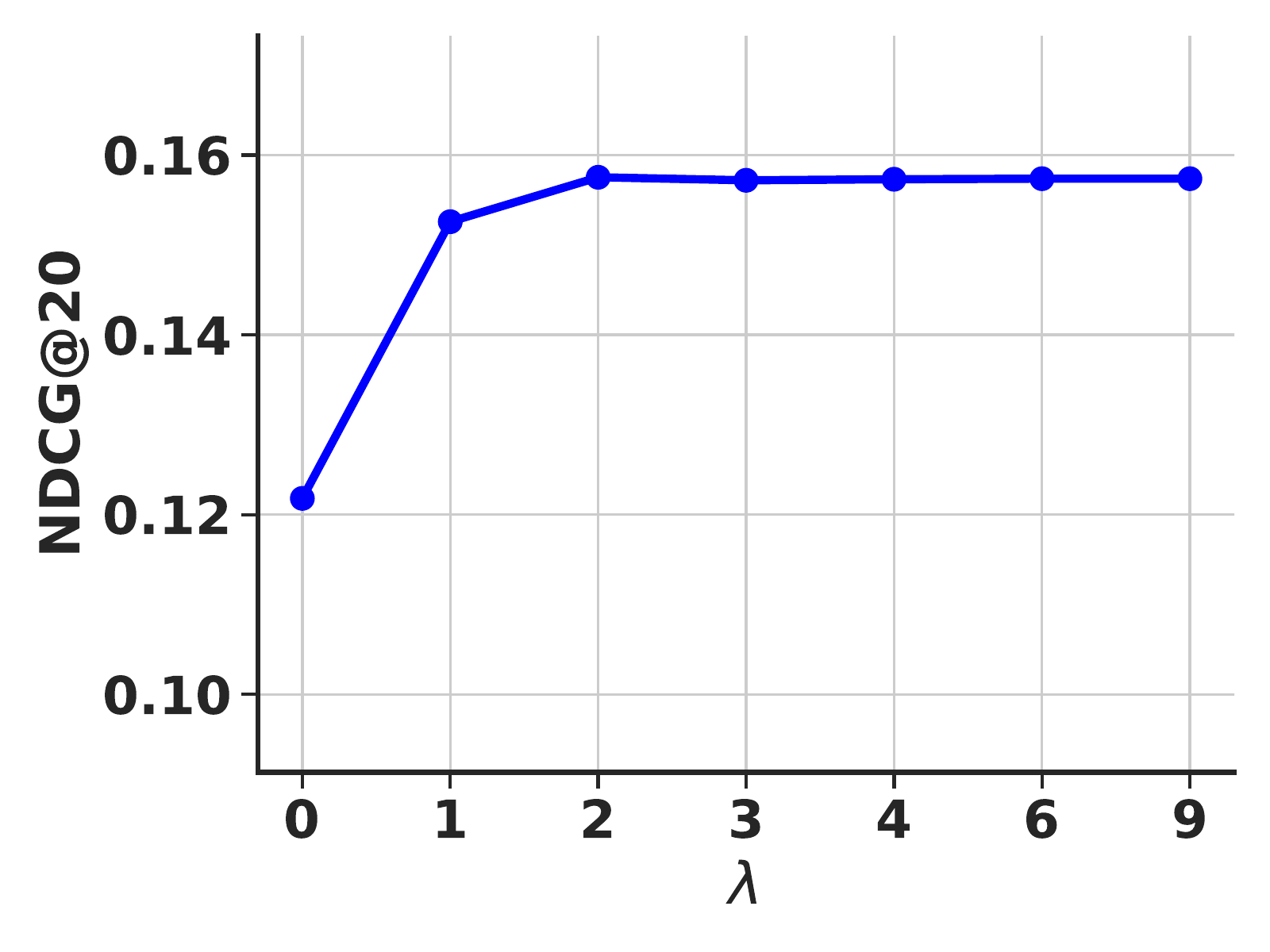}}}}
{\subfigure[Yelp2018 - NDCG@20]
{\includegraphics[width=0.22\linewidth]{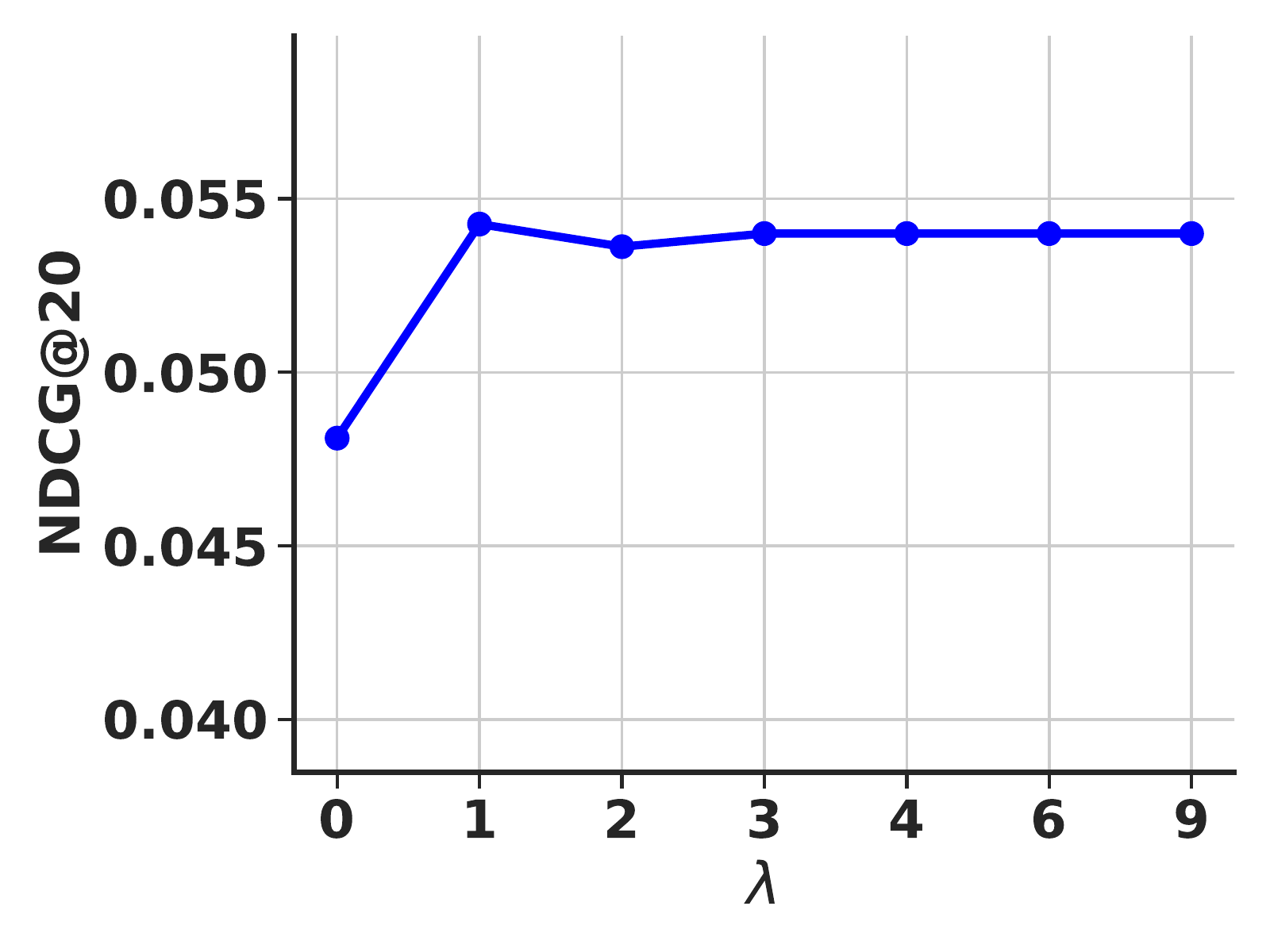}}}
{\subfigure[Amazon-book - NDCG@20]
{\includegraphics[width=0.22\linewidth]{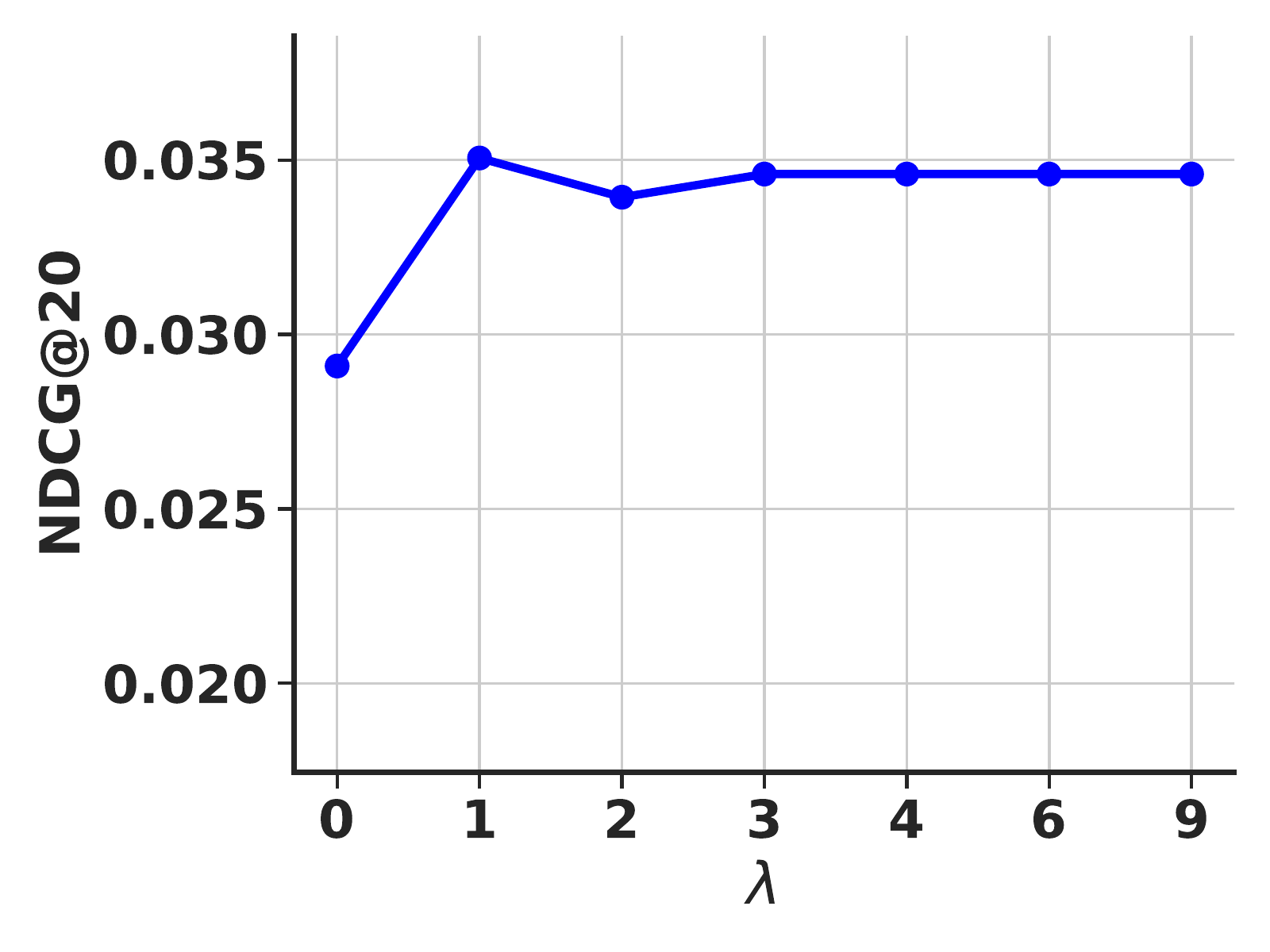}}}
{\subfigure[LastFM - NDCG@20]
{\includegraphics[width=0.22\linewidth]{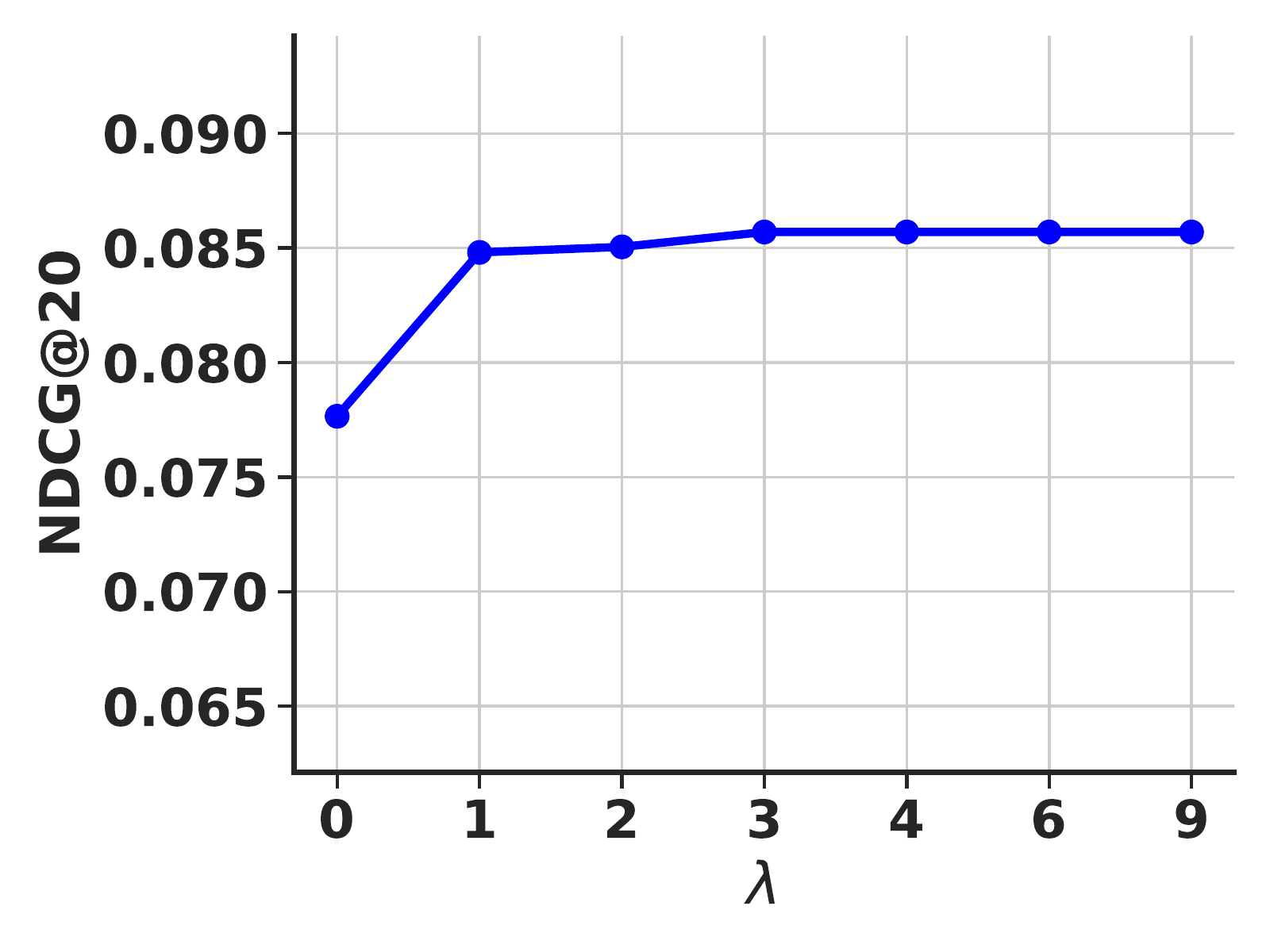}}}
\vskip -0.2in
\caption{The effect of hyper-parameter $\lambda$ under Recall@20 and NDCG@20 metrics.}\label{fig:lambda}
\end{figure*}

\vskip -0.1in

\subsubsection{\textbf{Sparsity pattern}}

To further investigate the piecewise constant trend captured by the proposed graph trend collaborative filtering, we study the sparsity patterns in the user-item embedding differences over the interactions, $\tDelta \vE^K$. We define the sparsity ratio as the portion of elements whose absolute value is less than 0.2. The results are summarized in Table~\ref{tab:sparse}. It can be observed that for all datasets, the proposed GTN has a much larger portion of sparse elements in the embedding differences than others. This verifies the trend captured by the proposed GTCF. 

\begin{table}[!htbp]
\centering
\vskip -0.10in
 \caption{Sparsity ratio in the user-item embedding differences.}
\vskip -0.150in
\label{tab:sparse}
\scalebox{0.85}
{
\begin{tabular}{|c|c|c|c|c|c|}
\hline
\multicolumn{2}{|c|}{\textbf{Datasets}}              & \textbf{Gowalla} & \textbf{Yelp2018} & \textbf{Amazon-Book} & \textbf{LastFM} \\ \hline \hline
\multirow{3}{*}{\rotatebox{90}{\textbf{Method}}} & \textbf{NGCF}     & 0.045            & 0.0289            & 0.0403               & 0.0835          \\ \cline{2-6} 
                                 & \textbf{LightGCN} & 0.017            & 0.017             & 0.0233               & 0.0218          \\ \cline{2-6} 
                                 & \textbf{\ourname{} (Ours)}      & 0.1241           & 0.1424            & 0.2612               & 0.2282          \\ \hline
\end{tabular}
}
\vskip -0.20in

\end{table}


\subsection{\textbf{Ablation Study of \ourname{}}}


 \subsubsection{\textbf{Effect of $\lambda$}}
We first investigate the impact of the hyper-parameter $\lambda$ in \ourname{}. The value of $\lambda$  controls the impact of the embedding smoothness. Figure~\ref{fig:lambda} shows the performance change of \ourname{} w.r.t. Recall@20 and NDCG@20. The value of $\lambda$ is searched from 0 to 9. We can find that \ourname{} is relatively insensitive to $\lambda$. More specifically, when $\lambda$ is larger than 2, the recommendation performance keeps stable. 
Moreover, increasing the value of $\lambda$ from 0 to 1 significantly improves the recommendation performance, indicating that the smoothness captured by the proposed graph trend collaborative filtering is helpful. 
 
\subsubsection{\textbf{Effect of Number of Layers}}

To study whether our proposed method \ourname{} can benefit from stacking more propagation layers, we vary the number of layers $k$ in the range of $\{1, 2, 3, 4, 5\}$ and report the performance of NGCF, LightGCN and \ourname{} on all datasets in  Table~\ref{tab:num_layers}. From the table, we make the following observations:  

\begin{itemize}[leftmargin=*] 
\item \vspace{-0.05in}
With the increase of model depth,  the recommendation performances of all GNN-based methods first increase and then decrease. 
Clearly, in most cases, the GNN-based methods achieve the peak performance with a model depth of 3, while more layers can degrade the prediction performance. This finding is consistent with the general observation of existing GNN-based methods such as NGCF and LightGCN. 

\item  Comparing with other methods (i.e., NGCF and LightGCN),  we can observe that \ourname{}  achieves the best prediction performance on  four datasets w.r.t. Recall@20 and NDCG@20 in most cases. More specifically, with five layers, our proposed method \ourname{} achieves significant improvement over the strongest baselines by 19\% on Recall@20 and 23\% on NDCG@20 in LastFM dataset. These general observations justify the effectiveness of adaptive smoothness captured by graph trend collaborative filtering. 
 
\end{itemize}

\subsubsection{\textbf{Convergence of GTCF}}
In Figure~\ref{fig:loss}, the loss value of the objective in Eq.~\eqref{eq:gtf_obj} is showed during the filtering process. Those four curves for different datasets verify the convergence of the proposed graph trend collaborative filtering.

\begin{figure*}[htbp]
 \centering
 \vskip -0.10in
{\subfigure[Gowalla]
{\includegraphics[width=0.22\linewidth]{{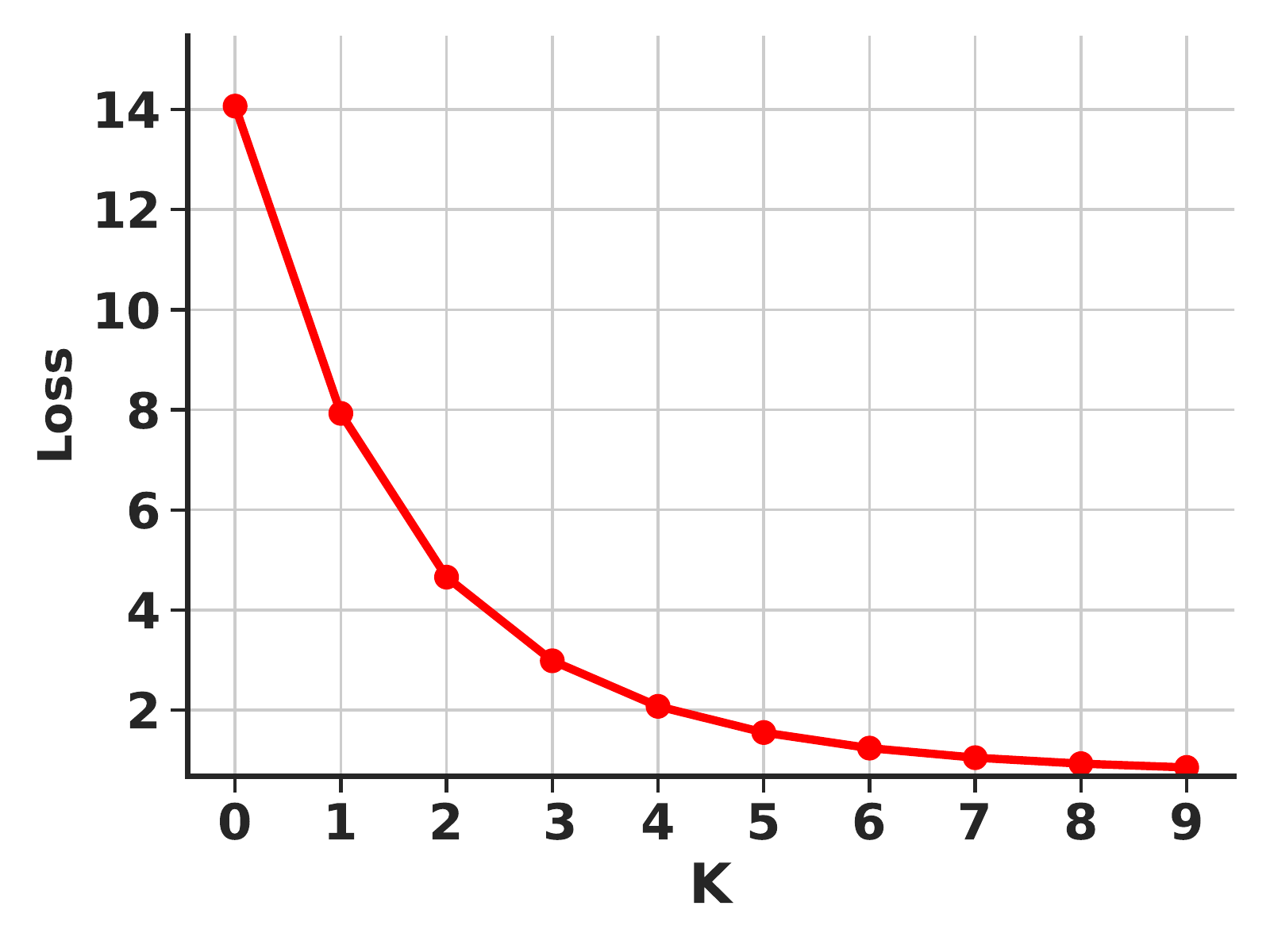}}}}
{\subfigure[Yelp2018]
{\includegraphics[width=0.22\linewidth]{{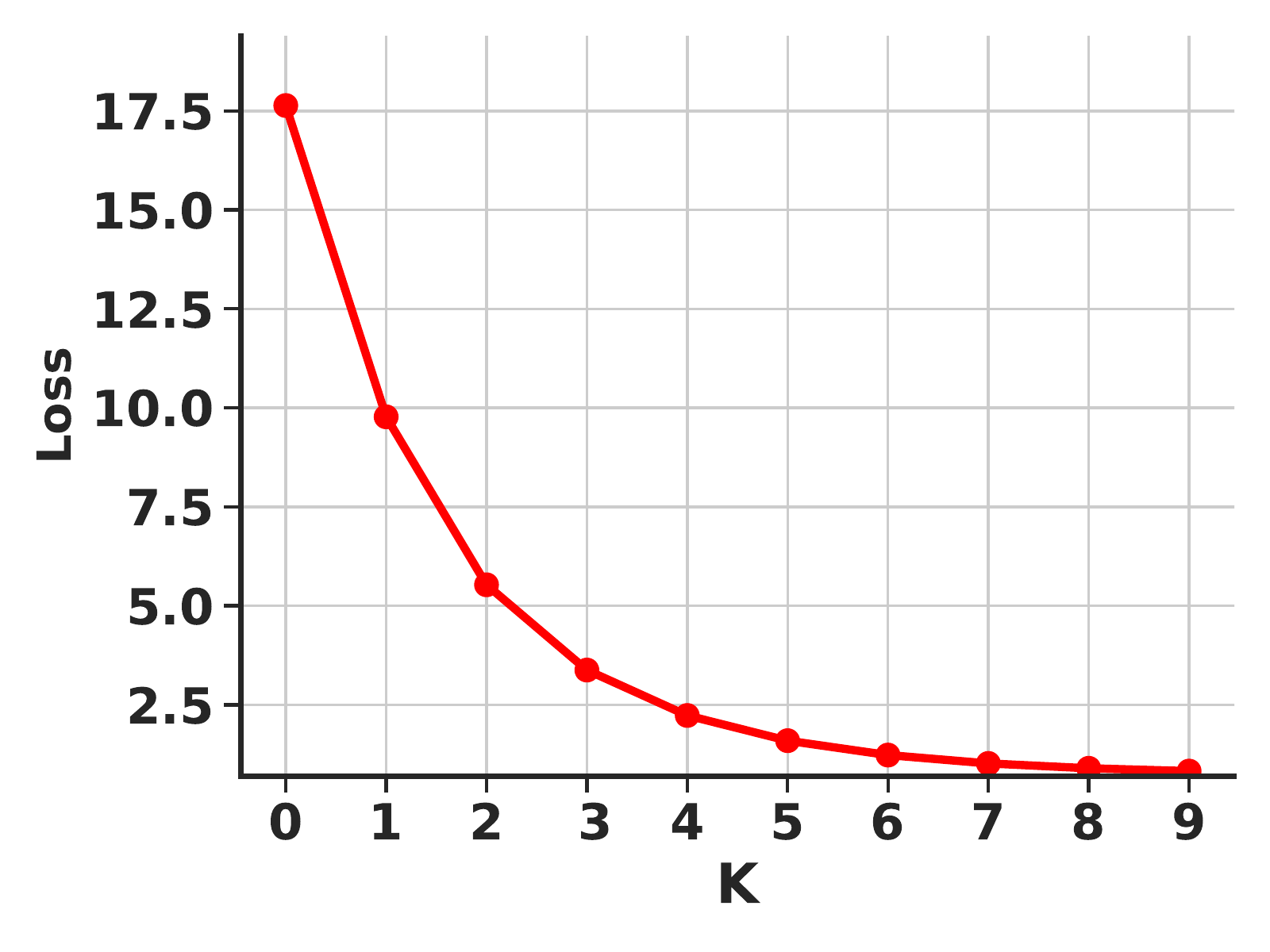}}}}
{\subfigure[Amazon-book]
{\includegraphics[width=0.22\linewidth]{{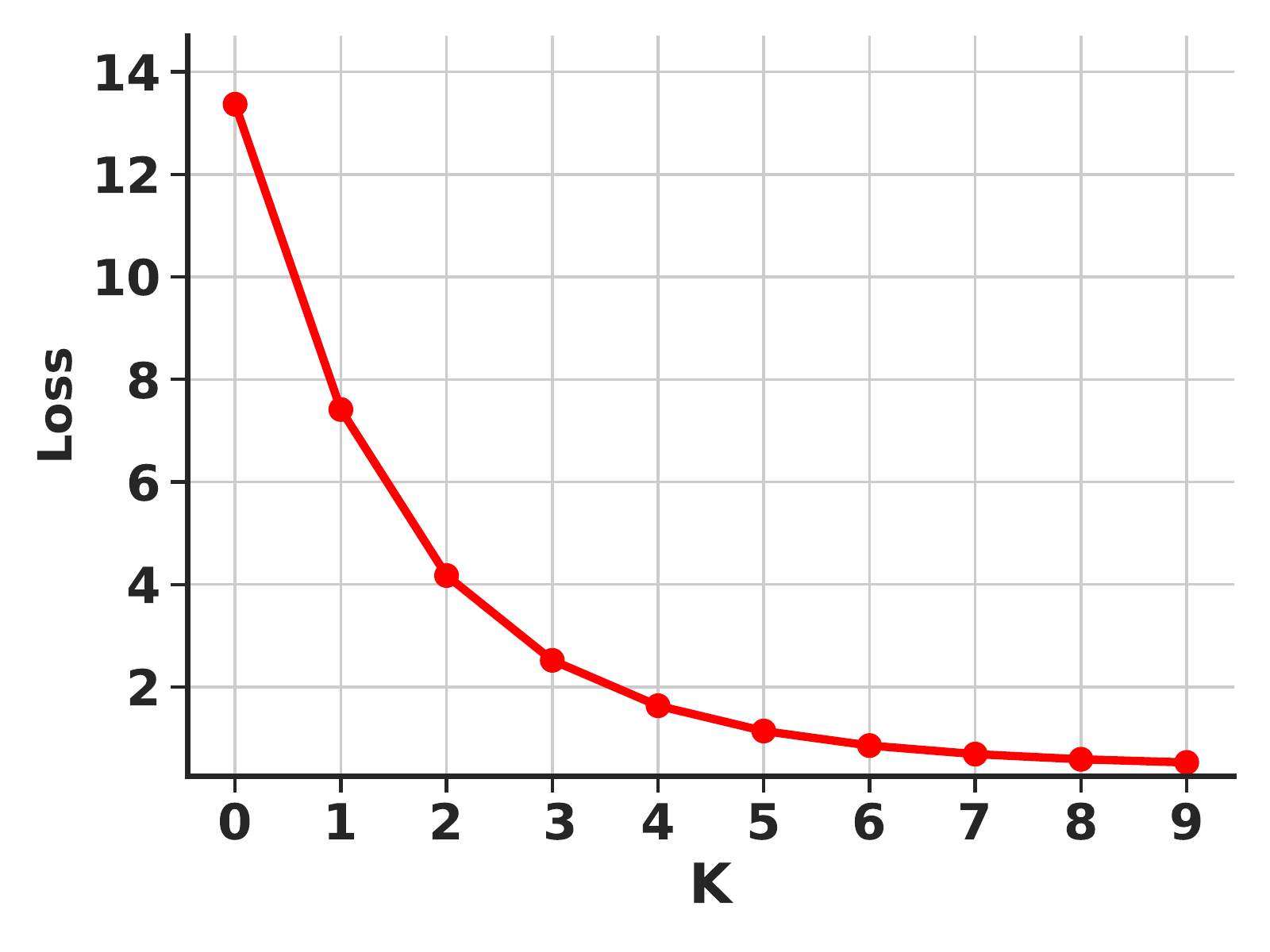}}}}
{\subfigure[LastFM]
{\includegraphics[width=0.22\linewidth]{{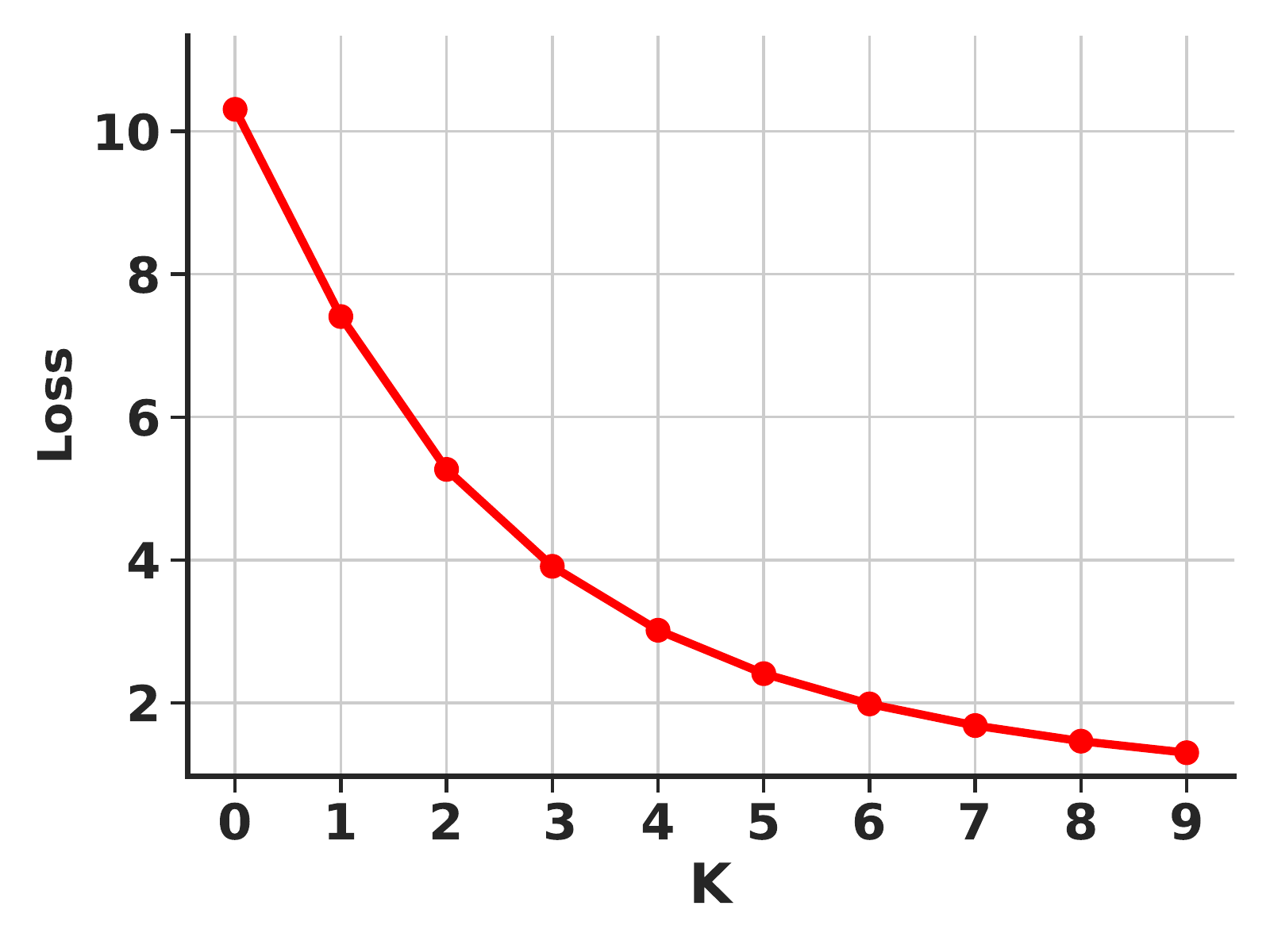}}}}
\vskip -0.20in
 \caption{Loss value of Eq.~\eqref{eq:gtf_obj} in the graph trend collaborative filtering process.}\label{fig:loss}
 \vskip -0.150in
 \end{figure*}

\subsubsection{\textbf{Effect of Training Epochs}}  We plot the performance curves of NGCF, LightGCN and \ourname{} w.r.t. training epochs in Figure~\ref{fig:training_curves}. From the figure, we can observe that at the same epoch {\ourname} always outperforms the other two methods and it also converges faster than the other two methods in most cases.

\begin{figure*}[htbp]
\centering
{\subfigure[Gowalla - Recall@20]
{\includegraphics[width=0.2\linewidth]{{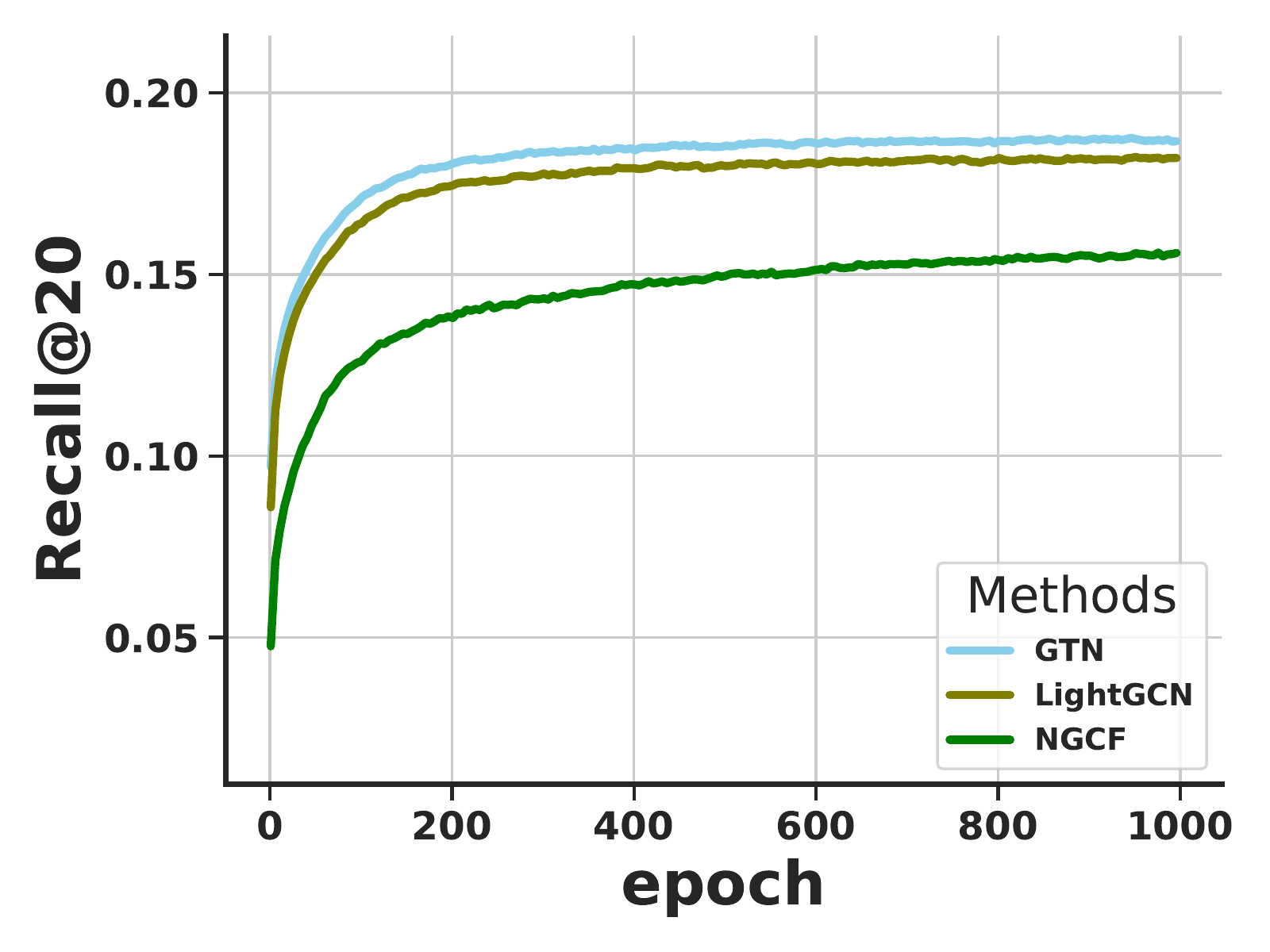}}}}
{\subfigure[Yelp2018 - Recall@20]
{\includegraphics[width=0.2\linewidth]{{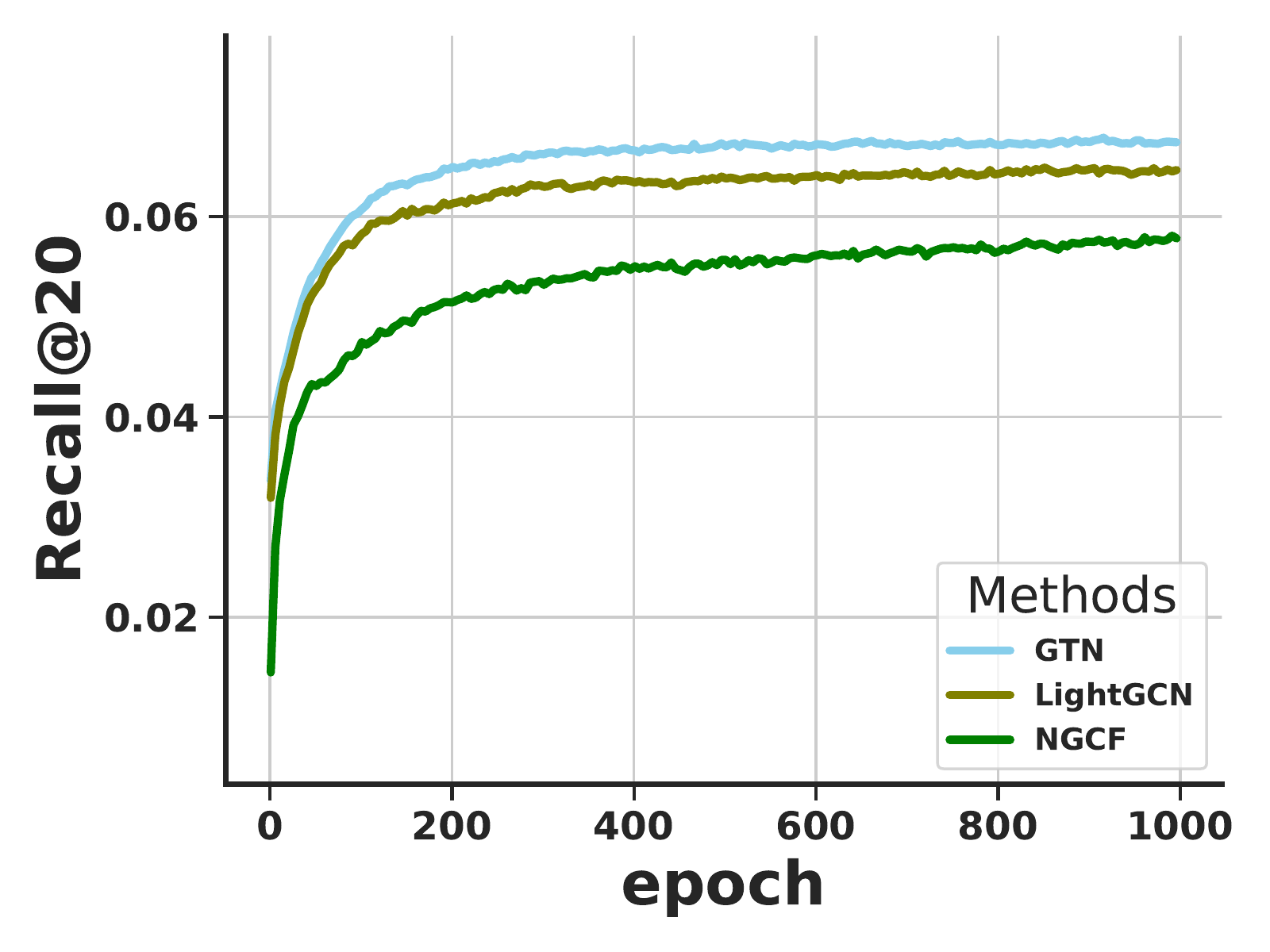}}}}
{\subfigure[Amazon-book - Recall@20]
{\includegraphics[width=0.2\linewidth]{{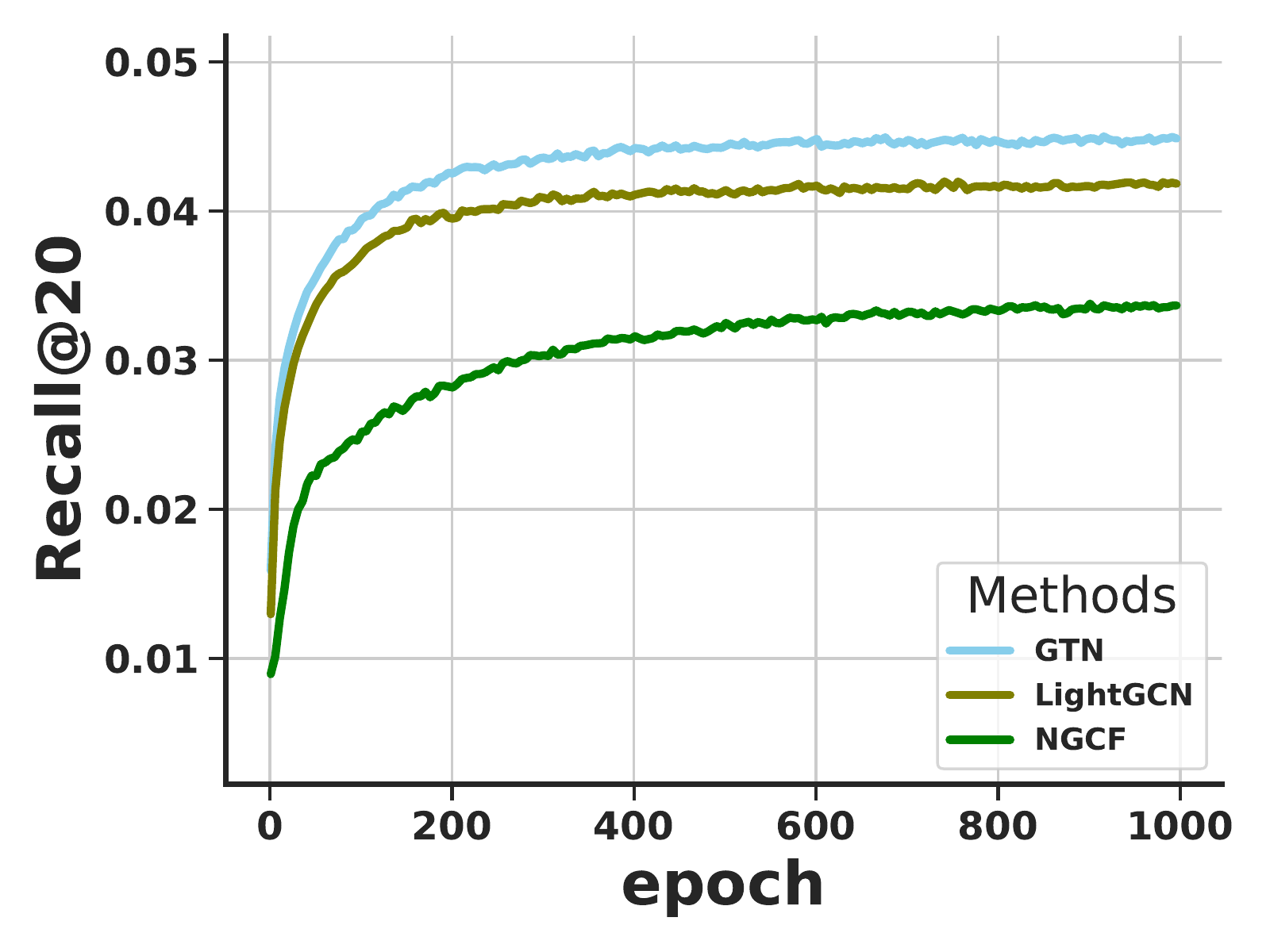}}}}
{\subfigure[LastFM - Recall@20]
{\includegraphics[width=0.2\linewidth]{{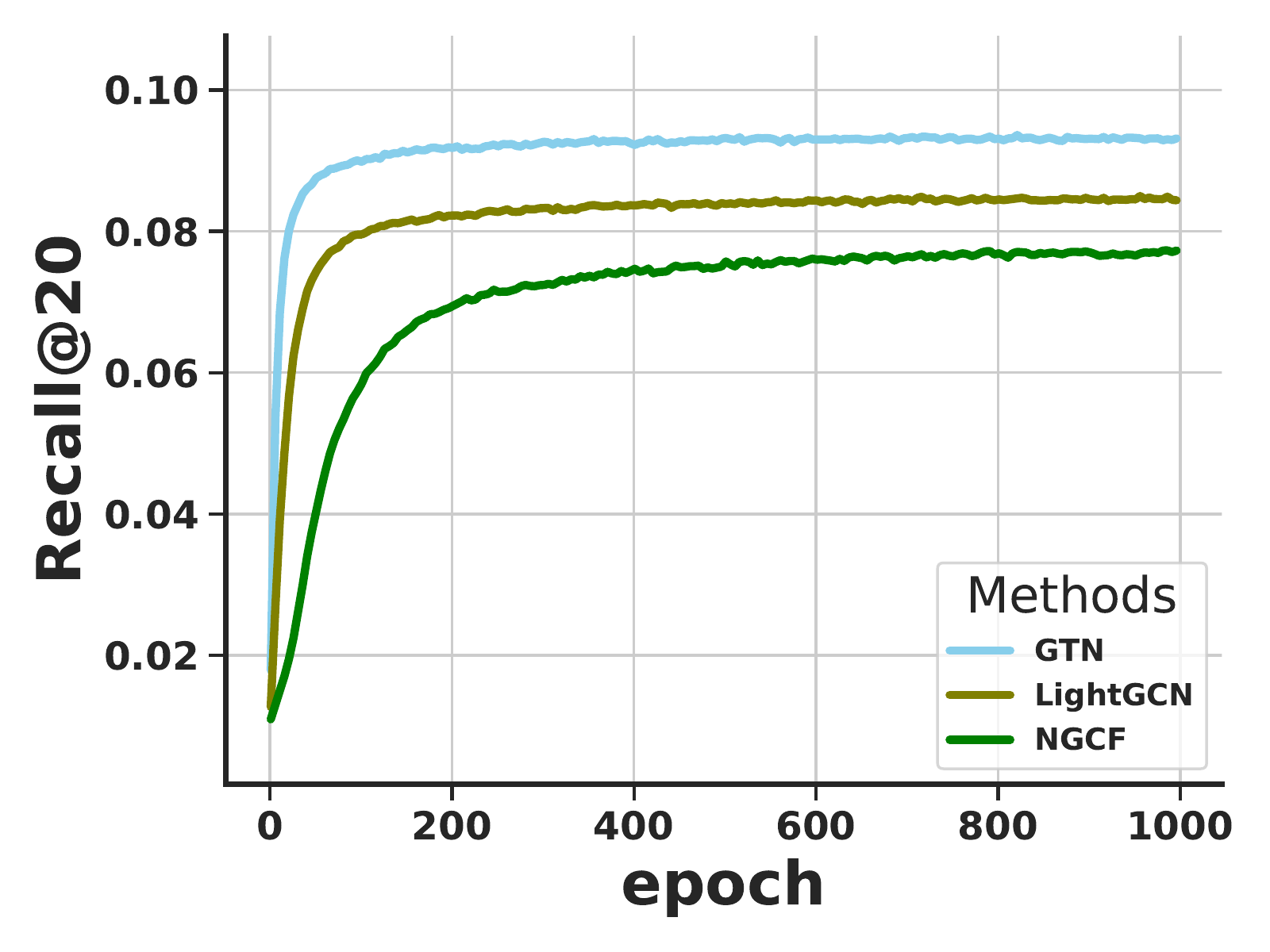}}}}

{\subfigure[Gowalla - NDCG@20]
{\includegraphics[width=0.2\linewidth]{{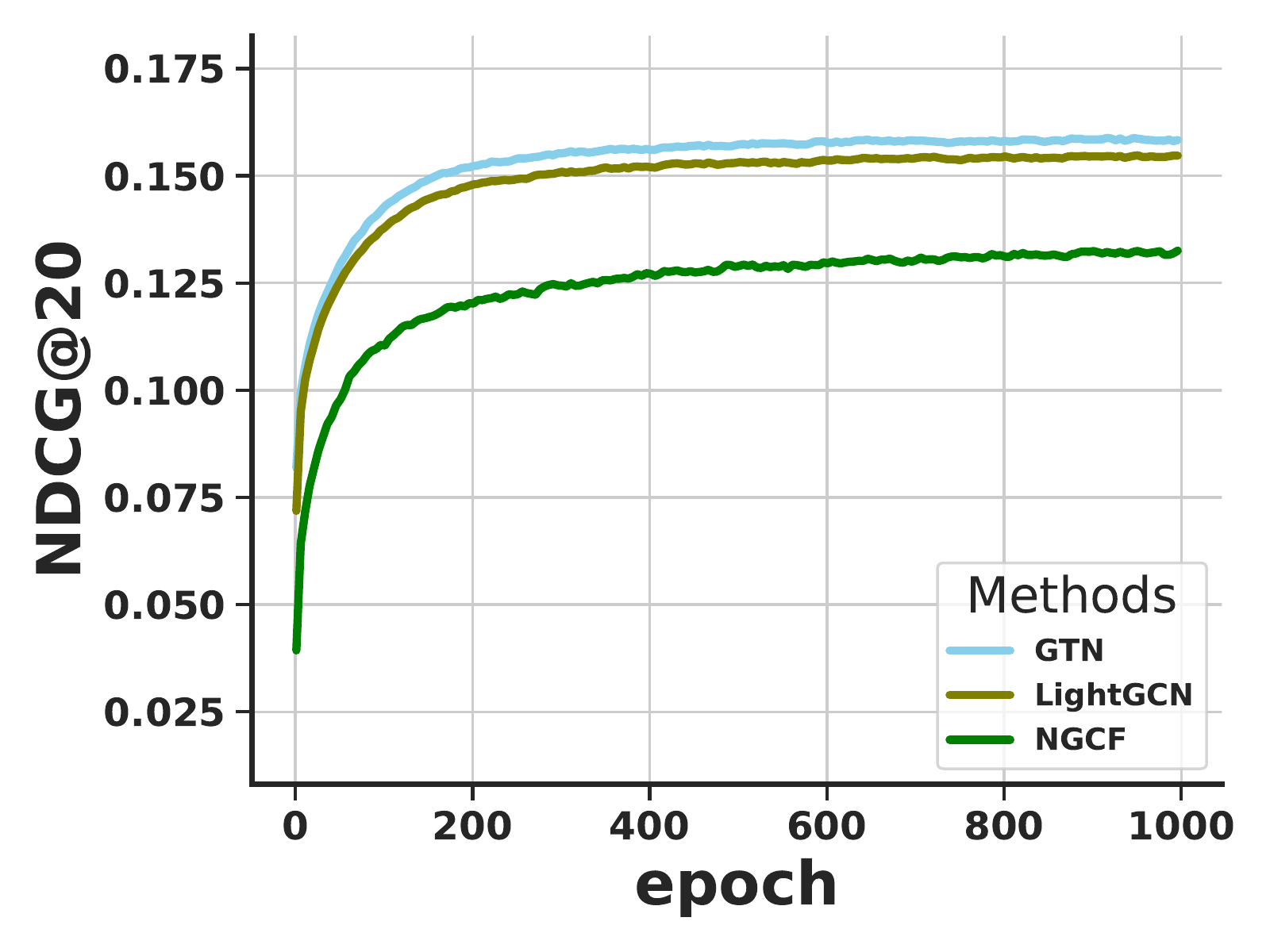}}}}
{\subfigure[Yelp2018 - NDCG@20]
{\includegraphics[width=0.2\linewidth]{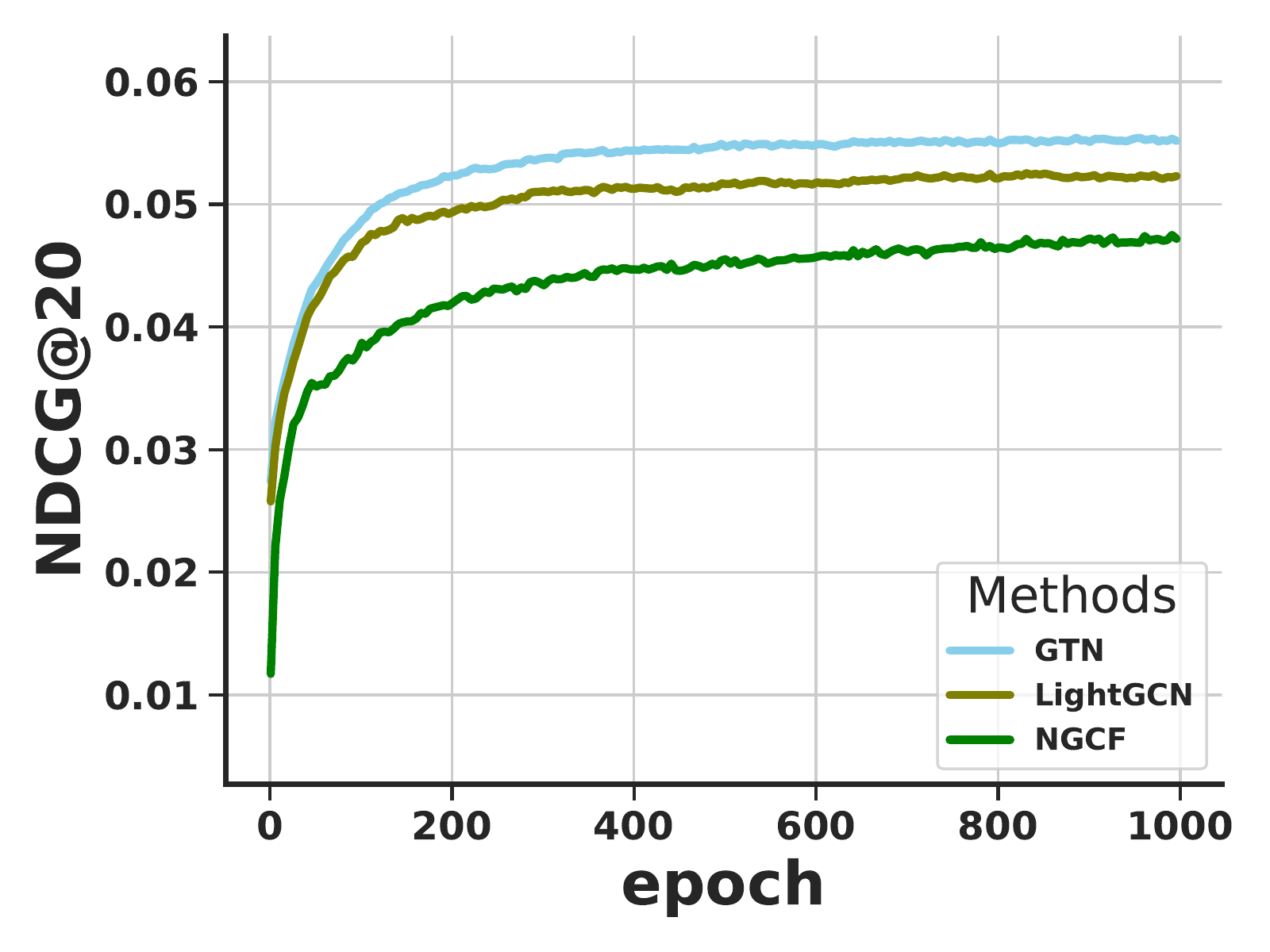}}}
{\subfigure[Amazon-book - NDCG@20]
{\includegraphics[width=0.2\linewidth]{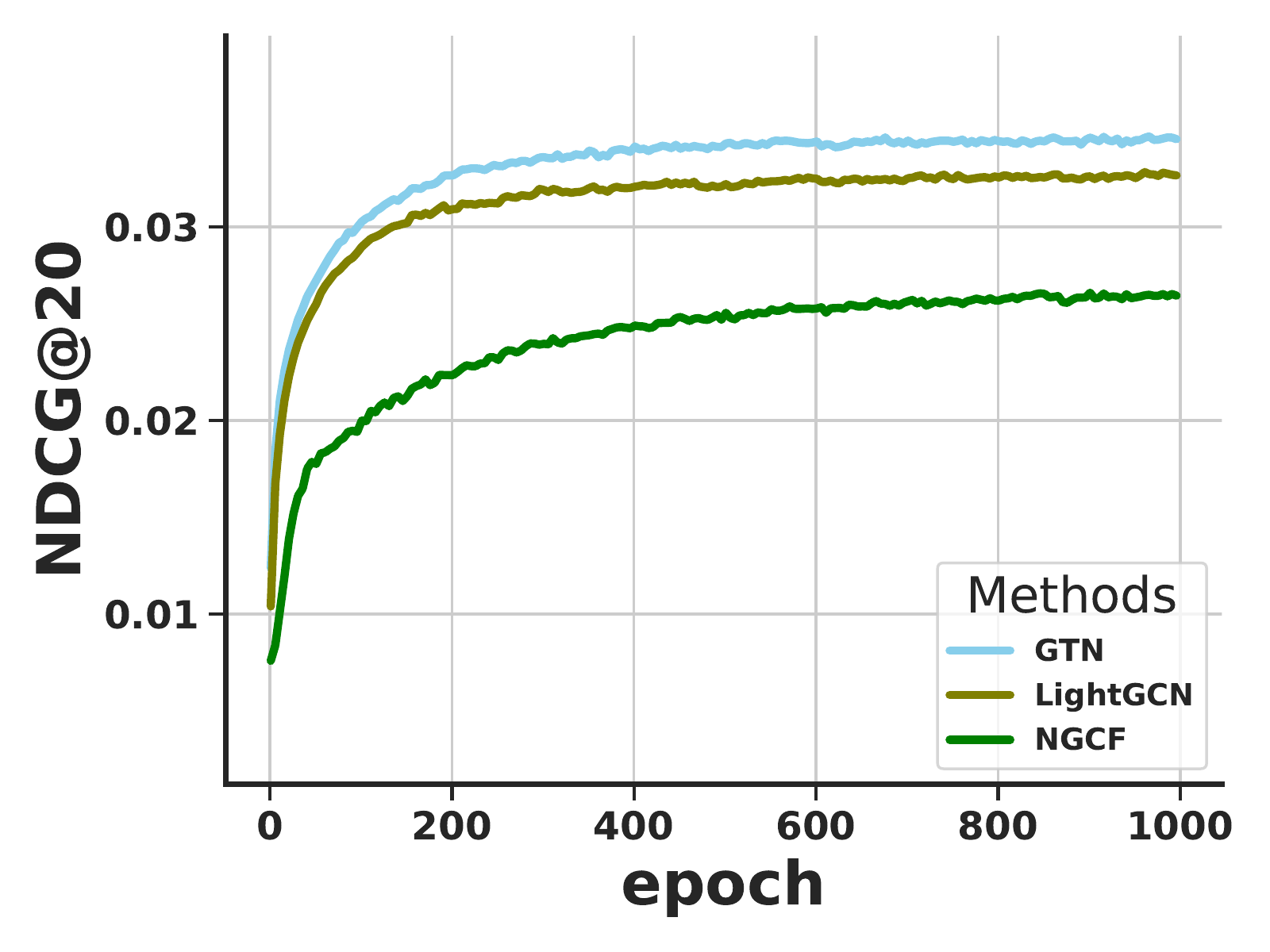}}}
{\subfigure[LastFM - NDCG@20]
{\includegraphics[width=0.2\linewidth]{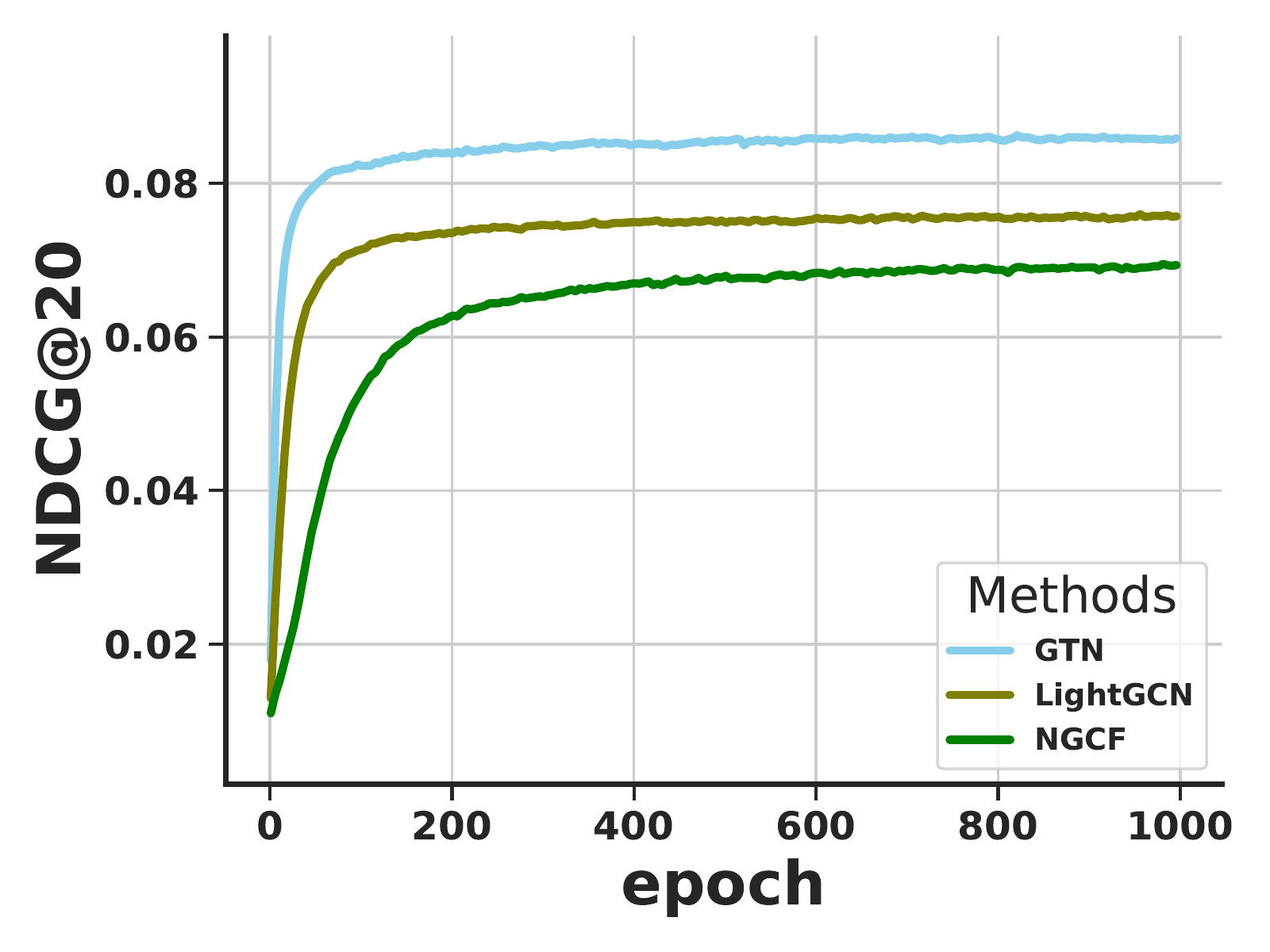}}}
\vskip -0.2in
\caption{Performance curves of NGCF, LightGCN, and \ourname{} during training.}\label{fig:training_curves}
\vskip -0.15in
\end{figure*}
\section{Related Work}
\label{sec:relatedwork}


\subsection{Collaborative Filtering}
Collaborative Filtering (CF) is one of the most popular techniques in the modern recommender systems~\cite{wang2015collaborative,zhao2018recommendations}. It assumes that users  who behave similarly are likely to have similar preferences towards items~\cite{tang2013social}. To capture users' preferences towards items, one widely used paradigm of CF technique is  to decompose user-item interaction data into embedding vectors of users and items~\cite{fan2018deep}.
For instance,  among the various collaborative filtering techniques, matrix factorization (MF) is the most popular one, which aims to learn representation vectors to represent users and items and performs inner product between them to make predictions~\cite{Rendle2009BPRBP,xue2017deep}. 
Later on, with the great success of deep neural networks on non-linear representation learning~\cite{fan2020deep,liu2020does,wang2020global,zhao2022multi},  recent efforts have been made to apply deep neural networks to recommendation tasks and they have shown promising prediction performance. For example, NeuCF~\cite{He2017NeuralCF} is proposed to replace the inner product of MF model with nonlinear neural networks to model non-linear interactions between users and items. DSCF~\cite{fan2019deep_dscf} adopts  a bi-directional long short-term memory network (Bi-LSTM) based deep language model to implicitly capture information from high-order neighbors to enhance user representations for recommendations.

\subsection{GNNs-based Recommender Systems}

Despite the great success achieved by the aforementioned CF methods, most of them  treat each interaction as an independent instance and fail to capture deep collaborative signals between users and items, easily leading to  sub-optimal representations of users and items~\cite{Wang2019NeuralGC}.
Recent years have witnessed the great success of  graph neural networks (GNNs) techniques in representation learning for graph data~\cite{ma2020deep_book,wu2020comprehensive,jin2021automated,jin2020graph_robust,derr2018signed,jin2021node}. The main idea of GNNs adopts a message-passing scheme to learn node embedding from local neighbors via  aggregation and transformation operations. Meanwhile,  data in recommender systems can be naturally denoted as graphs, where users and items can be denoted as nodes and their implicit interactions can be denoted as edges in the graph~\cite{fan2020graph}. Hence, it is desirable to build recommender systems based on graph neural networks to capture deep collaborative signals in the user-item graph. For example, as the extension of a famous GNN variant GraphSage~\cite{hamilton2017inductive} in recommender systems, PinSage~\cite{ying2018graph} is applied to learn the embedding of nodes in web-scale graphs for item recommendation. GraphRec~\cite{fan2019graph,fan2020graph}  introduces a graph attention network framework to encoder user-item interactions and user-user social relations for social recommendations. NGCF~\cite{Wang2019NeuralGC} proposes to explicitly encode collaborative signals in the form of high-order connectivities by message-passing propagation for  recommendations. Later on,  LightGCN~\cite{He2020LightGCNSA} is introduced to largely simplify the NGCF model by removing feature transformation and nonlinear activation, achieving the state-of-the-art prediction performance for recommendations.

\subsection{Trend Filtering}
$\ell_1$ trend filtering~\cite{kim2009ell_1} was first proposed in 2009 to estimate the trend in time series data. Later on, it was studied for nonparametric regression to provide adaptive piecewise polynomial estimation which adapts to the local level of smoothness~\cite{liu2021elastic,tibshirani2014adaptive, wakayama2021locally,chen2016representations}. 
Ortelli et.al~\cite{ortelli2021tensor} proposed to extend the trend filtering for tensor denoising.
The idea of trend filtering was also extended to the general graph signal by generalizing the difference operator on 1D grid to the graph difference operator on general graphs~\cite{wang2016trend,chen2014signal}. It also exhibited piecewise polynomial behaviors on graphs.
The work of~\cite{chen2016representations} proposed multi-resolution local sets based piecewise-constant and piecewise-smooth dictionaries as graph signal representations on graphs. 
Recently, graph trend filtering also inspires the design of graph neural networks such as Elastic GNNs~\cite{liu2021elastic}, which have been shown to be much more robust against adversarial graph attacks due to the enhanced local smoothness. 
To the best of our knowledge, trend filtering has not been studied in recommender systems. In this work, we make the first attempt to investigate trend filtering as a new collaborative filtering technique for recommender systems, and we hope the novel and principled design of graph trend filtering networks (GTN) can provide new inspirations in this important field.

\section{Conclusion}
\label{sec:conclusion}

Although existing graph neural networks based recommender systems achieve promising prediction performance, they are unable to discover the heterogeneous reliability of interactions among instances during the embedding propagation. 
In this work, we first analyze the drawbacks of the existing state-of-art GNNs-based collaborative filtering methods, such as non-adaptive propagation and non-robustness through the perspective of Laplacian smoothing in the user-item graph. To address these drawbacks,  we introduce a principled graph trend collaborative filtering technique and propose a novel graph trend filtering networks framework (\ourname{}) to capture the adaptive reliability of the interactions between users and items. 
Extensive experiments on four real-world datasets demonstrate the effectiveness of our proposed method. Moreover, the empirical study also demonstrates the benefits of modeling the adaptive reliability of the interactions between users and items for recommendations.

\begin{acks}
The research described in this paper has been partly supported by NSFC (project no. 62102335), an internal research fund from The Hong Kong Polytechnic University (project no. P0036200), a General Research Fund from the Hong Kong Research Grants Council (Project No.: PolyU 15200021). 
Xiangyu Zhao is supported by Start-up Grant (No.9610565) for the New Faculty of the City University of Hong Kong and the CCF-Tencent Open Fund. 
Xiaorui Liu, Wei Jin, and Jiliang Tang are supported by the National Science Foundation (NSF) under grant numbers IIS1714741, CNS1815636, IIS1845081, IIS1907704, IIS1928278, IIS1955285, IOS2107215, and IOS2035472, the Army Research Office (ARO) under grant number W911NF-21-1-0198, the Home Depot, Cisco Systems Inc and SNAP.

\end{acks}

\newpage
\balance
\bibliographystyle{ACM-Reference-Format}
\bibliography{references/references}

\end{document}